\documentclass[english]{article}
\usepackage{lmodern}

\usepackage[T1]{fontenc}
\usepackage[latin9]{inputenc}
\usepackage[a4paper]{geometry}
\geometry{verbose,tmargin=1in,bmargin=1in,lmargin=1.1in,rmargin=1.1in}
\setlength{\parskip}{\medskipamount}
\setlength{\parindent}{0pt}
\usepackage{color}
\usepackage{babel}
\usepackage{float}
\usepackage{amsmath}
\usepackage{amsthm}
\usepackage{amssymb}
\usepackage{stmaryrd}
\usepackage{graphicx}
\usepackage[authoryear]{natbib}
\usepackage{appendix}
\usepackage[unicode=true,
 bookmarks=true,bookmarksnumbered=false,bookmarksopen=false,
 breaklinks=true,pdfborder={0 0 0},pdfborderstyle={},colorlinks=true]
 {hyperref}
\hypersetup{pdftitle={A Gibbs sampler for a class of random convex polytopes},
 pdfauthor={PEJ, RG, PTE, APD}}
 \usepackage{authblk}

\makeatletter

\floatstyle{ruled}
\newfloat{algorithm}{tbp}{loa}
\providecommand{\algorithmname}{Algorithm}
\floatname{algorithm}{\protect\algorithmname}

\numberwithin{equation}{section}
\theoremstyle{plain}
\newtheorem{thm}{\protect\theoremname}[section]
\theoremstyle{plain}
\newtheorem{lem}[thm]{\protect\lemmaname}
\theoremstyle{definition}

\theoremstyle{remark}

\theoremstyle{plain}
\newtheorem{prop}[thm]{\protect\propositionname}

\newcommand{\pp}{\texttt p}
\newcommand{\qq}{\texttt q}
\newcommand{\rr}{\texttt r}

\usepackage{dsfont}
\usepackage[dvipsnames,svgnames,x11names,hyperref]{xcolor}
\hypersetup{urlcolor=RoyalBlue,citecolor=RoyalBlue}
\usepackage[normalem]{ulem}

\@ifundefined{showcaptionsetup}{}{%
 \PassOptionsToPackage{caption=false}{subfig}}
\usepackage{subfig}
\makeatother

\providecommand{\examplename}{Example}
\providecommand{\lemmaname}{Lemma}
\providecommand{\propositionname}{Proposition}
\providecommand{\remarkname}{Remark}
\providecommand{\theoremname}{Theorem}

\SetSymbolFont{stmry}{bold}{U}{stmry}{m}{n}

\title{A Gibbs sampler for a class of random convex polytopes}
\author[1]{Pierre E. Jacob}
\author[2]{Ruobin Gong}
\author[3]{Paul T. Edlefsen}
\author[1]{Arthur P. Dempster}
\affil[1]{Department of Statistics, Harvard University}
\affil[2]{Department of Statistics, Rutgers University}
\affil[3]{Fred Hutchinson Cancer Research Center}

\setcounter{Maxaffil}{0}

\usepackage{setspace}
\setstretch{1.2}

\begin{document}
\maketitle

\begin{abstract}
We present a Gibbs sampler for the Dempster--Shafer (DS) approach to
statistical inference for Categorical distributions. The DS framework extends the
Bayesian approach, allows in particular the use of partial prior information,
and yields three-valued uncertainty assessments representing
probabilities ``for'', ``against'', and ``don't know'' about formal assertions
of interest.   The
proposed algorithm targets the distribution of a class of random convex
polytopes which encapsulate the DS inference. The sampler relies on an
equivalence between the iterative constraints of the vertex configuration and
the non-negativity of cycles in a fully connected directed graph.
Illustrations include the testing of independence in $2\times 2$ contingency
tables and parameter estimation of the linkage model.
\end{abstract}

\section{Introduction \label{sec:intro}}

Consider observed counts of $K$ possible categories, denoted by
$N_1,\ldots,N_K$ and summing to $N$.  We assume that these counts are 
sums of independent draws from a Categorical distribution. The goal is to infer the associated
parameters $\theta$ in the simplex of dimension $K$ and to forecast future
observations. The setting is most familiar to statisticians and
if $K$ is small relative to $N$, and without further information about
$\theta$, the story is somewhat simple with the maximum likelihood estimator
being both very intuitive and efficient. 
The plot thickens quickly if $N$ is small or indeterminate, if
partial prior information is available, if observations are imperfect, or if
additional constraints are imposed, especially when uncertainty
quantification is simultaneously sought 
\citep{fitzpatrick1987quick,berger1992ordered,sison1995simultaneous,liu2000estimation,lang2004multinomial,chafai2009confidence,dunson2009nonparametric}.
As any probability distribution on a finite unordered set is necessarily Categorical,  the
setting often arises as part of 
more elaborate procedures.
Besides, the canonical nature of Categorical distributions has made them a common test bed for
various approaches to inference
\citep{walley1996inferences,bernard1998bayesian}. 

The Dempster--Shafer (DS) theory is a framework for probabilistic reasoning
based on observed data and modeling of knowledge.  In the DS framework,
inferences on user-defined assertions are expressed probabilistically. These
assertions can be statements concerning parameters (``the parameter belongs to
a certain set'') or concerning future observations. Contrary to Bayesian
inference, no prior distribution is strictly required, and partial prior
specification is allowed (see Section \ref{subsec:partialprior}).  
Rather than posterior probabilities, DS inference
yields three-valued assessments of uncertainty, namely probabilities ``for'',
``against'', and ``don't know'' associated with the assertion of interest, and
denoted by $(\pp, \qq, \rr)$ (see Section \ref{subsec:dsinference}).  In his pioneering work,
\cite{dempster1963direct,Dempster66,dempster1967upper,
dempster1968generalization}
developed the idea of upper and lower probabilities for 
assertions of interest. Together with
the contributions of 
\cite{shafer1976mathematical,shafer1979allocations}, the approach 
became known as the Dempster--Shafer (DS) theory of belief functions.
The framework has various connections to other ways of obtaining lower and
upper probabilities and to robust Bayesian inference
\citep{wasserman1990belief}. Over the past decades, the DS theory saw various
applications in signal processing, computer vision and machine learning \citep[see
e.g.,][]{bloch1996some,vasseur1999perceptual,denoeux2000neural,basir2007engine,denoeux2008k,diaz2010shape}.
As outlined in
\cite{Dempster:IJAR,dempster2014statistical} the DS 
framework is an ambitious tool for carrying out statistical inferences in
scientific practice.  During the developments of
DS, Categorical distributions were front and center due to their generality
and relevance to ubiquitous statistical objects such as contingency tables. 

The computation required by the DS approach for Categorical distributions
proved to be demanding. The approach
involves distributions of convex polytopes within the simplex, some properties
of which were found in
\citet{Dempster66,dempster1968generalization,dempster1972class}.
Unfortunately, no closed-form joint distribution of the vertices has 
been found,
hindering both theoretical developments and numerical 
approximations. The challenge prompted
\citet{denoeux2006constructing} to comment that, ``Dempster studied the
trinomial case {[}...{]} However, the application of these results to compute
the marginal belief function {[}\dots{]} has proved, so far and to our
knowledge, mathematically intractable.'' Likewise, \citet{lawrence2009new}
commented: ``{[}...{]} his method for the Multinomial model is seldom used,
partly because of its computational complexity.'' Over the past fifty years,
the literature saw a handful of alternative methods for Categorical inference
via generalized fiducial inference \citep{Hannig2016JASA, liu2016generalized}
the Imprecise
Dirichlet Model \citep{walley1996inferences}, the Dirichlet-DSM method
\citep{lawrence2009new}, the vector-valued Poisson model
\citep{edlefsen2009estimating}, and the Poisson-projection method for
Multinomial data \cite[unpublished PhD thesis]{gong2018thesis}. 
The latter three methods were motivated in part to circumvent the
computational hurdle put forward by the original DS formulation.
The present article aims at filling that gap by proposing an
algorithm that carries out the computation proposed in 
\citet{Dempster66,dempster1972class}. The
presentation does not assume previous 
knowledge on DS inference.

Section~\ref{sec:categorical} introduces the formal setup.
Section~\ref{sec:gibbssampler} presents an
equivalence between constraints arising in the definition of the 
problem and the existence of ``negative cycles'' (defined in Section \ref{subsec:constraints}) in a certain weighted graph,
that leads to a Gibbs sampler.  Various
illustrations and extensions are laid out in Section~\ref{sec:manipulations}. 
Section~\ref{sec:application} concerns applications 
to $2\times 2$ contingency tables and 
the linkage model.  Elements of future research are discussed 
in Section \ref{sec:discussion}.  
Code in R \citep{Rsoftware} is available at
\href{https://github.com/pierrejacob/dempsterpolytope}{github.com/pierrejacob/dempsterpolytope}
to reproduce the figures of the article.

\section{Inference in Categorical distributions \label{sec:categorical}}

We describe inference in Categorical distributions
as proposed in \citet{Dempster66}, using the following notation.
The observations are $\mathbf{x} =
(x_{n})_{n\in[N]}$, with $x_n \in [K]$ for all $n\in[N]$, where $[m]$ denotes
the set $\{1,\ldots,m\}$ for $m\geq1$. The number of categories is $K$.  The
$K$-dimensional simplex is
$\Delta=\{z\in\mathbb{R}_{+}^{K}\colon\;\sum_{k=1}^{K}z_{k}=1\}$.  The set of
measurable subsets of $\Delta$ is denoted by $\mathcal{B}(\Delta)$.  We denote
the vertices of $\Delta$ by $V_{1},\ldots,V_{K}$. In barycentric coordinates,
$V_{k}$ is a $K$-vector with $k$-th entry equal to one and other entries equal
to zero.  A polytope is a set of points $z\in\mathbb{R}^{K}$ satisfying linear
inequalities, of the form $Mz\leq c$ understood component-wise, and where $M$
is a matrix with $K$ columns and $c$ is a vector.  For a given
$\mathbf{x}\in[K]^N$, $\mathcal{I}_{k}$ is the set of indices $\{n\in[N]\colon
x_n = k\}$.  The counts are $N_k = |\mathcal{I}_k|$
and $\sum_{k\in [K]} N_k = N$.  Coordinates of $u_{n}\in\Delta$ are denoted
by $u_{n,k}$ for $k\in[K]$.  The volume of a set $A$ is denoted by
$\text{Vol}(A)$. The uniform variable $Z$ over $\mathcal{S}$ is written
$Z\sim \mathcal{S}$.

\subsection{Sampling mechanism and feasible sets\label{subsec:samplingmechanism}}

The goal is to infer the parameters $\theta=(\theta_{1},\ldots,\theta_{K})\in\Delta$
of a Categorical distribution using observation $\mathbf{x} = (x_{n})_{n\in[N]}\in[K]^N$.
Viewing $x_n$ as a random variable, the model states $\mathbb{P}(x_{n}=k)=\theta_{k}$ for all $k\in[K]$,
$n\in[N]$. Generating draws from a Categorical distribution can be done
in different ways.  
In the DS approach, the choice of sampling mechanism of the observable data has
an impact on inference of the parameters, a feature that distinguishes DS from
likelihood-based approaches, and aligns it with fiducial
\citep{fisher1935fiducial,Hannig2016JASA}, structural
\citep{fraser1968structure}, and functional \citep{dawid1982functional}
approaches. 
Appendix \ref{appx:sampling-mechanism} illustrates this impact in a simple setting.
We follow \citet{Dempster66} and consider the following sampling
mechanism for $x_n$, which is invariant by permutation of the labels of the categories;
it is equivalent to the ``Gumbel-max trick'' \citep{maddison2014sampling} as 
explained in Appendix \ref{appx:sampling-gumbelmax}.
Given $\theta$, 
for each $k\in[K]$, define $\Delta_{k}(\theta)$
to be a ``subsimplex'' obtained as the polytope with the same vertices as $\Delta$ except that
vertex $V_{k}$ is replaced by $\theta$.
The sets $(\Delta_{k}(\theta))_{k\in[K]}$ form a partition of $\Delta$, shown
in Figure \ref{fig:sdk:subsimplex}. It can be checked that $\text{Vol}(\Delta_k(\theta)) = \theta_k$.
Then, introduce $u_n\sim \Delta$, and 
define $x_n$ as 
\begin{equation}\label{eq:sampling-mechanism}
x_{n}=\sum_{k\in[K]}k\mathds{1}(u_{n}\in\Delta_{k}(\theta)).
\end{equation}
In other words, $x_n$ is the unique index $k\in[K]$ such that $u_{n}$ belongs to
$\Delta_{k}(\theta)$. Since $\text{Vol}(\Delta_k(\theta)) = \theta_k$, $x_n$ indeed follows the 
Categorical distribution with parameter $\theta$. 
Lemma \ref{Lemma-5.2} recalls a useful
characterization of $\Delta_{k}(\theta)$.
\begin{lem}
    \label{Lemma-5.2}(Lemma 5.2 in \citet{Dempster66}). For $k\in[K]$, $\theta\in\Delta$
    and $u_n\in \Delta$, $u_n\in \Delta_{k}(\theta)$ if and only if  $u_{n,\ell}/u_{n,k} \geq \theta_\ell / \theta_k$
for all $\ell\in[K]$.
\end{lem}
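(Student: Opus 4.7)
The plan is to work directly with barycentric coordinates, using the fact that $\Delta_{k}(\theta)$ is by definition the convex hull of $\theta$ together with $\{V_{\ell}:\ell\neq k\}$. Any $u_{n}\in\Delta_{k}(\theta)$ can thus be written uniquely as $u_{n}=\alpha_{k}\theta+\sum_{\ell\neq k}\alpha_{\ell}V_{\ell}$ with non-negative weights $(\alpha_{\ell})_{\ell\in[K]}$ summing to one. Reading off coordinates using that $V_{\ell}$ has $\ell$-th entry $1$ and other entries $0$ gives $u_{n,k}=\alpha_{k}\theta_{k}$ and $u_{n,\ell}=\alpha_{k}\theta_{\ell}+\alpha_{\ell}$ for $\ell\neq k$. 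This coordinate identification is the engine of the whole argument.

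For the forward implication, I would use these expressions to compute $u_{n,\ell}/u_{n,k}=\theta_{\ell}/\theta_{k}+\alpha_{\ell}/(\alpha_{k}\theta_{k})$ for $\ell\neq k$, which is $\geq\theta_{\ell}/\theta_{k}$ since $\alpha_{\ell},\alpha_{k},\theta_{k}\geq 0$; the inequality for $\ell=k$ is trivial. For the converse, I would turn the coordinate identities around and \emph{define} candidate weights $\alpha_{k}:=u_{n,k}/\theta_{k}$ and $\alpha_{\ell}:=u_{n,\ell}-\alpha_{k}\theta_{\ell}=u_{n,k}\bigl(u_{n,\ell}/u_{n,k}-\theta_{\ell}/\theta_{k}\bigr)$ for $\ell\neq k$; the hypothesis $u_{n,\ell}/u_{n,k}\geq\theta_{\ell}/\theta_{k}$ is exactly what makes these $\alpha_{\ell}$'s non-negative.

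It remains to check $\sum_{\ell}\alpha_{\ell}=1$, which is where both simplex constraints enter. Using $\sum_{\ell\neq k}u_{n,\ell}=1-u_{n,k}$ and $\sum_{\ell\neq k}\theta_{\ell}=1-\theta_{k}$, a one-line computation gives
\begin{equation*}
\sum_{\ell\in[K]}\alpha_{\ell}=\frac{u_{n,k}}{\theta_{k}}+(1-u_{n,k})-\frac{u_{n,k}}{\theta_{k}}(1-\theta_{k})=1,
\end{equation*}
so $u_{n}$ is a genuine convex combination of the vertices of $\Delta_{k}(\theta)$.

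There is no real obstacle: the lemma is essentially a translation between the vertex description and the half-space description of the subsimplex, and the only mild subtlety is the degenerate case $\theta_{k}=0$ (in which $\Delta_{k}(\theta)$ collapses to a face of $\Delta$ and the ratio condition must be interpreted with the convention $a/0=\infty$, forcing $u_{n,k}=0$); a brief remark handles this boundary case and completes the argument.
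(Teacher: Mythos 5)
Your proof is correct, but note that the paper itself offers no proof of this statement: it is recalled verbatim as Lemma 5.2 of \citet{Dempster66}, so there is nothing internal to compare against. Your argument is a clean, self-contained verification, and its engine --- the coordinate identification $u_{n,k}=\alpha_k\theta_k$, $u_{n,\ell}=\alpha_k\theta_\ell+\alpha_\ell$ for $\ell\neq k$ --- is exactly the parametrization the paper uses elsewhere (it is equation (5.7) of \citet{Dempster66}, reproduced in Algorithm \ref{alg:Uniform-sampling-in} to sample uniformly from $\Delta_k(\theta)$), so your proof fits the paper's machinery precisely. Both directions check out: the forward computation $u_{n,\ell}/u_{n,k}=\theta_\ell/\theta_k+\alpha_\ell/(\alpha_k\theta_k)$ is valid whenever $u_{n,k}>0$, and your converse construction $\alpha_k=u_{n,k}/\theta_k$, $\alpha_\ell=u_{n,\ell}-\alpha_k\theta_\ell$ with the sum-to-one computation is exactly right. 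One suggestion for tightening the boundary discussion: rather than invoking the convention $a/0=\infty$, state the inequalities in cross-multiplied form $u_{n,\ell}\,\theta_k \geq u_{n,k}\,\theta_\ell$, which is how such constraints are handled as linear inequalities anyway; then the case $u_{n,k}=0$ in the forward direction is trivial, and in the converse the case $\theta_k=0$ forces $u_{n,k}=0$ (since some $\theta_\ell>0$), placing $u_n$ in the face spanned by $\{V_\ell\}_{\ell\neq k}$, which coincides with $\Delta_k(\theta)$ when $\theta_k=0$. With that rephrasing the degenerate cases need no special convention at all.
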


Given fixed observations $\mathbf{x} = (x_n)_{n\in[N]}$, the sampling mechanism in \eqref{eq:sampling-mechanism}
can be turned into constraints on the values of $\mathbf{u} = (u_n)_{n\in[N]}$ and $\theta$ that could
have led to the observations.
A central piece of the
machinery is the following set,
\begin{equation}
\mathcal{R}_{\mathbf{x}}=\left\{ (u_{1},\ldots,u_{N})\in\Delta^{N}\colon\;
\exists\theta\in\Delta\quad\forall n\in[N]\quad 
u_{n} \in \Delta_{x_n}(\theta) \right\} .\label{eq:RN}
\end{equation}
It is the set of all possible realizations of $\mathbf{u}$ which could have produced
the data $\mathbf{x}$ for (at least) some $\theta$, via the specified sampling mechanism. 
Given a realization of $\mathbf{u}\in\mathcal{R}_{\mathbf{x}}$ by definition
there is a non-empty ``feasible'' set $\mathcal{F}(\mathbf{u})\subset \Delta$ defined
as
\begin{equation}
    \mathcal{F}(\mathbf{u})=\left\{ \theta\in \Delta\colon\;\forall n\in[N]\quad u_{n} \in \Delta_{x_n}(\theta)\right\}.\label{eq:FA}
\end{equation}
On the other hand if $\mathbf{u}$ is an arbitrary point in $\Delta^N$ then 
$\mathcal{F}(\mathbf{u})$ defined above can be empty.
For example $\mathbf{u}=(u_1,u_2,u_3)$ shown in Figure \ref{fig:sdk:cups}
leads to an empty $\mathcal{F}(\mathbf{u})$ for the observations $x_1=1,x_2=3,x_3=2$.
The goal of the proposed method is to obtain 
non-empty sets $\mathcal{F}(\mathbf{u})$,
as illustrated for another data set in Figure
\ref{fig:gibbs:1}.
We can rewrite \eqref{eq:RN} as $\mathcal{R}_{\mathbf{x}} =
\{\mathbf{u}\colon \mathcal{F}(\mathbf{u})\neq \emptyset\}$. 

The ingredients introduced thus far specify the ``source'' of a belief function
\citep[e.g.][]{wasserman1990belief}. The central object of interest here is the
distribution of the random sets $\mathcal{F}(\mathbf{u})$
conditional on them being non-empty.
We consider the uniform distribution on $\mathcal{R}_{\mathbf{x}}$
denoted by $\nu_{\mathbf{x}}$, with density 
\begin{equation}
    \forall u_{1},\ldots,u_{N}\in \Delta^{N}\quad\nu_{\mathbf{x}}(u_{1},\ldots,u_{N})=\text{Vol}\left(\mathcal{R}_{\mathbf{x}}\right)^{-1}\mathds{1}\left((u_{1},\ldots,u_{N})\in\mathcal{R}_{\mathbf{x}}\right).\label{eq:nuNuniformonRN}
\end{equation}
Our main contribution is an algorithm 
to sample $\mathbf{u}$ from $\nu_{\mathbf{x}}$. The sets $\mathcal{F}(\mathbf{u})$
obtained when $\mathbf{u} \sim \nu_{\mathbf{x}}$ constitute the class of random convex polytopes
studied in \citet{dempster1972class} and referred to in the title of the present article.
The distribution 
$\nu_{\mathbf{x}}$ is also the result of Dempster's rule of combination 
\citep{dempster1967upper}
applied to the information provided separately by each of the $N$ observations.

\begin{figure}[t]
\begin{centering}
    \subfloat[\label{fig:sdk:subsimplex}]{\begin{centering}
\includegraphics[width=0.35\textwidth]{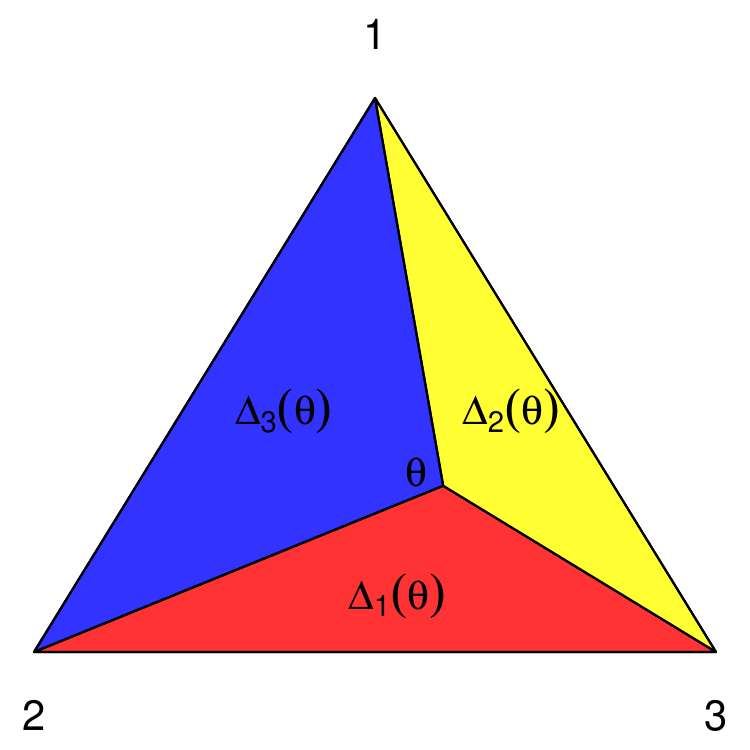}
\par\end{centering}
}
\hspace*{1cm}
\subfloat[\label{fig:sdk:cups}]{\begin{centering}
    \includegraphics[width=0.35\textwidth]{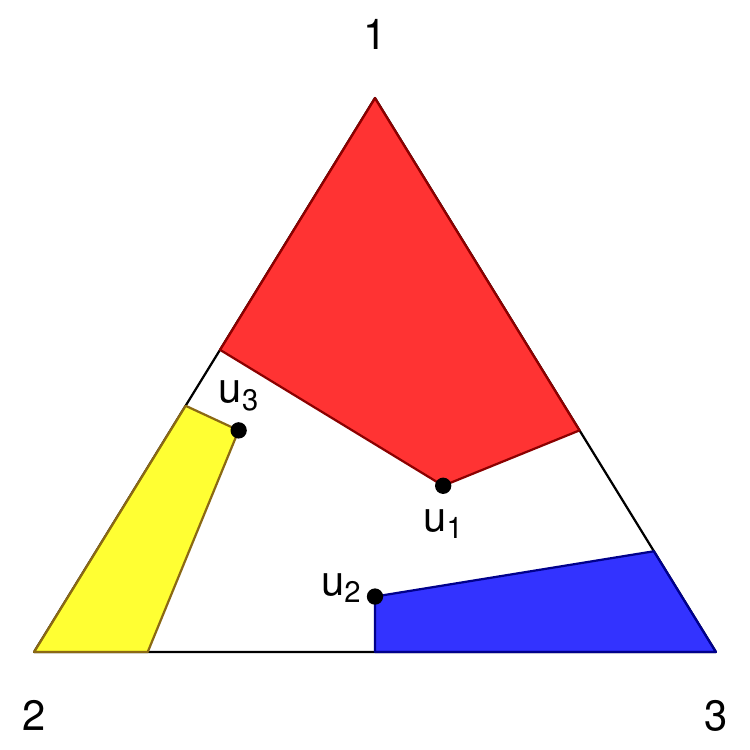}
\par\end{centering}
}
\par\end{centering}
\caption{Partition of $\Delta$ into $(\Delta_{k}(\theta))_{k\in[K]}$ in \ref{fig:sdk:subsimplex},
with $K=3$.
Each point $u_n\in\Delta$ defines,
for a fixed $x_n\in[K]$, a set of $\theta \in \Delta$ such that $u_n\in \Delta_{x_n}(\theta)$;
three such sets are colored in \ref{fig:sdk:cups}, for  $x_1=1,x_2=3,x_3=2$.
Here, no $\theta\in\Delta$ is such that $u_n\in \Delta_{x_n}(\theta)$ for $n=1,2,3$.
\label{fig:sdk:triangle}}
\end{figure}

\subsection{Inference using random sets \label{subsec:dsinference}}

We recall briefly how random sets can be processed into  ``lower'' and
``upper'' probabilities as in 
\citet{Dempster66}, or into ``belief'' and
``plausibility'' as in
\citet{shafer1976mathematical,shafer1990perspectives,wasserman1990belief}, or
$(\pp, \qq, \rr)$ probabilities as in \citet{Dempster:IJAR}.  The user provides
a measurable subset $\Sigma \in \mathcal{B}(\Delta)$ corresponding to an
``assertion'' of interest about the parameter, for instance $\Sigma=\{\theta\in \Delta \colon \theta_1 \leq 1/3\}$, or
$\Sigma=\{\theta\in\Delta\colon \theta_1/\theta_2 > \theta_3/\theta_4\}$.  The belief
function  assigns a value to each $\Sigma \in \mathcal{B}(\Delta)$ defined as
$\text{Bel}(\Sigma)=\nu_{\mathbf{x}}(\{\mathbf{u}\colon
\mathcal{F}(\mathbf{u})\subset \Sigma\})$.  This can be called lower
probability and written $\underbar{\ensuremath{P}}(\Sigma)$.  The upper
probability or ``plausibility'' $\bar{\ensuremath{P}}(\Sigma)$ is defined as
$1-\underbar{\ensuremath{P}}(\Sigma^c)$, or equivalently
$\nu_{\mathbf{x}}(\{\mathbf{u}\colon \mathcal{F}(\mathbf{u}) \cap \Sigma \neq
\emptyset)\}$.  Bayesian inference is recovered exactly when
combining the distribution of $\mathcal{F}(\mathbf{u})$ obtained from $\mathbf{x}$
with a prior distribution on $\theta$,
see \citet{dempster1968generalization} and Section \ref{subsec:partialprior}.
Following \citet{Dempster:IJAR} DS inference can be summarized via the
probability triple $(\pp, \qq, \rr)$:
\begin{equation}
\pp(\Sigma) = \underbar{\ensuremath{P}}(\Sigma), \quad \qq(\Sigma) = 1 - \bar{P}(\Sigma), \quad \rr(\Sigma) = \bar{P}(\Sigma)  - \underbar{\ensuremath{P}}(\Sigma), \label{eq:pqr}
\end{equation}
with $\pp + \qq + \rr = 1$ for all $\Sigma$, quantifying support ``for'',
``against'', and ``don't know'' about the assertion $\Sigma$. As argued in
\citet{Dempster:IJAR,gong2019simultaneous}, the triple $(\pp, \qq,
\rr)$ draws a stochastic parallel to the three-valued logic, with $\rr$
representing weight of evidence in a third, indeterminate logical state. A
$\pp$ or $\qq$ value close to 1 is interpreted as strong evidence towards
$\Sigma$ or $\Sigma^c$, respectively. A large $\rr$ suggests that the model and
data are structurally deficient in making precise judgment about the assertion
$\Sigma$ or its negation.

Sampling methods enable approximations of these probabilities via standard
Monte Carlo arguments.  A simple strategy is to draw $(u_n)_{n\in[N]}$ from the
uniform on $\Delta^N$ until $\mathcal{F}(\mathbf{u})$ is non-empty. 
However the rejection rate would be prohibitively high
as $N$ increases.  
Some properties of $\nu_\mathbf{x}$ have
been obtained in \citet{Dempster66,dempster1972class}.  For example, Equation
(2.1) in \citet{dempster1972class} states that, for a fixed $\theta\in\Delta$,
$\nu_{\mathbf{x}}(\{\mathbf{u}\colon \theta \in \mathcal{F}(\mathbf{u})\})$ is
equal to the Multinomial probability mass function with parameter $\theta$ 
evaluated at $N_1,\ldots,N_K$. Equation (2.5) in \citet{dempster1972class} gives the expected volume
of $\mathcal{F}(\mathbf{u})$.
\citet{dempster1972class} also obtains the distribution of
vertices of $\mathcal{F}(\mathbf{u})$ under $\mathbf{u}\sim\nu_{\mathbf{x}}$ with smallest and largest
coordinate $\theta_k$ for any $k\in[K]$, which are Dirichlet distributions.
These enable the approximation  of $(\pp,\qq,\rr)$ for certain
assertions, including the sets $\{\theta\colon \theta_k\in [0,c]\}$ for
arbitrary $c\in[0,1]$. However, for general assertions 
the joint distribution of all vertices of $\mathcal{F}(\mathbf{u})$ 
under $\mathbf{u}\sim\nu_{\mathbf{x}}$
is necessary, as 
in the case of both applications in Section~\ref{sec:application}.

\section{Proposed Gibbs sampler \label{sec:gibbssampler}}

\begin{algorithm}
    Input: $k\in[K]$, $\theta\in\Delta$, and the vertices of $\Delta$ denoted by $(V_{\ell})_{\ell\in[K]}$.
\begin{itemize}
\item Sample $(w_{1},\ldots,w_{K})$ uniformly on $\Delta$,\\
e.g. $\tilde{w}_{\ell}\sim\text{Exponential}(1)$ for all $\ell\in[K]$
and $w_{\ell}=\tilde{w}_{\ell}/\sum_{j=1}^{K}\tilde{w}_{j}$.
\item Define the point $z=w_{k}\theta+\sum_{\ell\neq k}w_{\ell}V_{\ell}$,\\
e.g. $z_{k}=w_{k}\theta_{k}$ and $z_{\ell}=w_{k}\theta_{\ell}+w_{\ell}$
for $\ell\neq k$.
\item Return $z$, a uniformly distributed point in $\Delta_{k}(\theta)$.
\end{itemize}
\caption{\label{alg:Uniform-sampling-in}Uniform sampling in $\Delta_{k}(\theta)$.}
\end{algorithm}

\subsection{Strategy \label{subsec:generalstrategy}}

The proposed algorithm is a Markov chain Monte Carlo (MCMC) method
targeting $\nu_\mathbf{x}$, thus referred to as the target distribution.
At the initial step, 
we set $\theta^{(0)}$ arbitrarily in $\Delta$, for example a draw from a Dirichlet distribution. 
Given $\theta^{(0)}$ we can sample,
for $k\in[K]$ and $n\in\mathcal{I}_{k}$, $u_{n}\sim \Delta_{k}(\theta^{(0)})$.
Then $\mathbf{u} = (u_{n})_{n\in[N]}$ is in $\mathcal{R}_{\mathbf{x}}$ because
$\theta^{(0)}$ is in $\mathcal{F}(\mathbf{u})$ by construction.
Sampling uniformly over $\Delta_k(\theta)$ can be done following 
equation (5.7) in \citet{Dempster66}, as recalled in Algorithm \ref{alg:Uniform-sampling-in}.
In this section we assume that $N_k = |\mathcal{I}_k|\geq 1$ for all $k\in[K]$,
and describe how to handle empty categories in Section \ref{subsec:addingremoving}.

We draw components of $\mathbf{u}$
from conditional distributions given the other components under $\nu_{\mathbf{x}}$,
namely we draw $\mathbf{u}_{\mathcal{I}_k} = (u_n)_{n\in \mathcal{I}_k}$
for $k\in[K]$ from 
$\nu_{\mathbf{x}}(d\mathbf{u}_{\mathcal{I}_{k}}|\mathbf{u}_{[N]\setminus\mathcal{I}_{k}})$.
Drawing $\mathbf{u}_{\mathcal{I}_k}$ from this conditional
distribution will constitute an iteration of a Gibbs sampler,
illustrated in Figure \ref{fig:gibbs:strategy} for the data $N_1=2,N_2=3,N_3=1$.
Figure \ref{fig:gibbs:1} shows a sample $\mathbf{u}\in\mathcal{R}_\mathbf{x}$,
with each $u_n$ colored according to $x_n\in[K]$.
Sampling from 
$\nu_{\mathbf{x}}(d\mathbf{u}_{\mathcal{I}_{k}}|\mathbf{u}_{[N]\setminus\mathcal{I}_{k}})$
can be understood as drawing all the points
of the same color conditional on the other points.
The overall
Gibbs sampler cycles through the $K$ categories
to generate a sequence of draws $\mathbf{u}^{(t)}$
that converges to $\nu_{\mathbf{x}}$ as $t\to
\infty$, for example in distribution. 
To each $\mathbf{u}^{(t)}$ is associated a feasible set $\mathcal{F}(\mathbf{u}^{(t)})$
that can contribute to the approximation of $(\pp,\qq,\rr)$ triples described in Section \ref{subsec:dsinference}.
The next question is how to sample from
the adequate conditional distributions.
Towards this aim we will draw on a representation of $\mathcal{R}_\mathbf{x}$ 
connected to the presence of ``negative cycles'' in a complete graph with $K$ vertices.

\begin{figure}[t]
    \centering
    \subfloat[\label{fig:gibbs:1}]{\begin{centering}
        \includegraphics[width=0.35\textwidth]{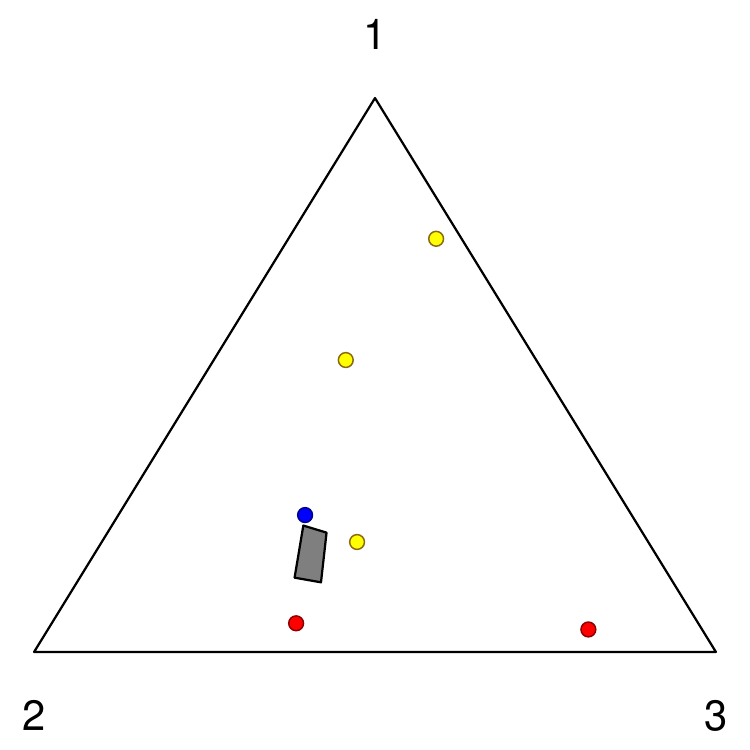}
\par\end{centering} }
\hspace*{1cm}
    \subfloat[\label{fig:gibbs:4}]{\begin{centering}
        \includegraphics[width=0.35\textwidth]{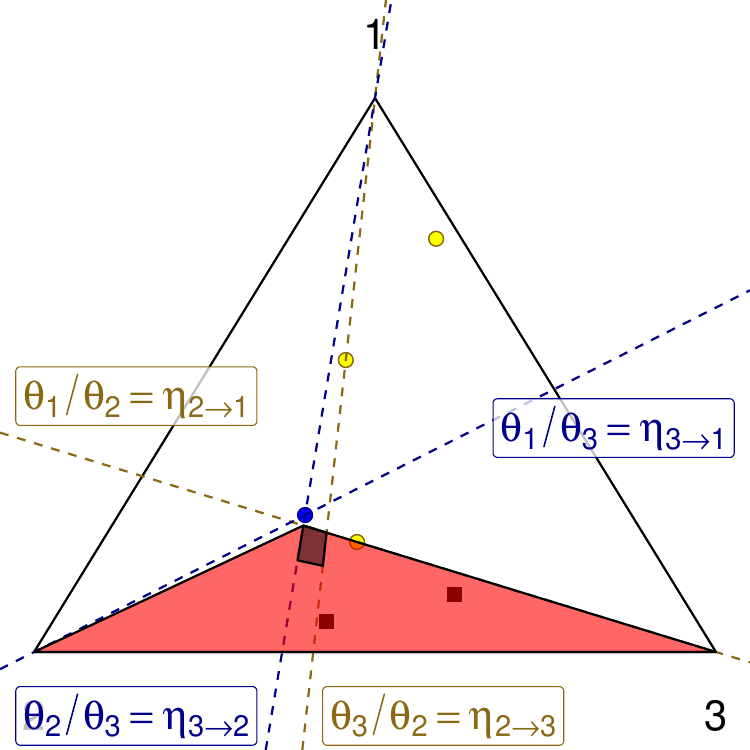}
\par\end{centering} }
\caption{ \label{fig:gibbs:strategy} Given $\mathbf{u}\in\mathcal{R}_{\mathbf{x}}$
(shown in \ref{fig:gibbs:1}), drop components $\mathbf{u}_{\mathcal{I}_k}$ for some $k\in[K]$ 
(the red dots in \ref{fig:gibbs:1}) and draw new components 
$\mathbf{u}_{\mathcal{I}_k}$ (the red squares in \ref{fig:gibbs:4}) from their
conditional distribution identified in Proposition \ref{prop:conditional}
with support represented by the shaded triangle. 
}
\end{figure}

\subsection{Non-emptiness of feasible sets \label{subsec:constraints}}

We can represent $\mathcal{R}_{\mathbf{x}}$ without mention of the
existence of some $\theta \in \Delta$ as in \eqref{eq:RN}, but 
instead with explicit constraints on the components of $\mathbf{u}$.
We first find an equivalent representation of $\theta\in\mathcal{F}(\mathbf{u})$ for a 
fixed $\mathbf{u}$. By definition $\theta\in\mathcal{F}(\mathbf{u})$ satisfies for all
$n\in[N]\;u_{n}\in\Delta_{x_{n}}(\theta)$. For each $k\in[K]$, using Lemma \ref{Lemma-5.2} we write
\[
\forall n\in\mathcal{I}_{k}\quad\forall\ell\in[K]\setminus\{k\}\quad\frac{u_{n,\ell}}{u_{n,k}}\geq\frac{\theta_{\ell}}{\theta_{k}} \quad\Leftrightarrow \quad
\forall\ell\in[K]\setminus\{k\}\quad\min_{n\in\mathcal{I}_{k}}\frac{u_{n,\ell}}{u_{n,k}}\geq\frac{\theta_{\ell}}{\theta_{k}}.
\]
This prompts the definition
\begin{equation}
\forall k\in[K]\quad\forall\ell\in[K] \quad\eta_{k\to\ell}(\mathbf{u})=\min_{n\in\mathcal{I}_{k}}\frac{u_{n,\ell}}{u_{n,k}}.\label{eq:definitionetas}
\end{equation}
Observe that the values $(\eta_{k\to\ell}(\mathbf{u}))$ depend on the observations through the sets $(\mathcal{I}_k)$.
At this point, $\theta\in\mathcal{F}(\mathbf{u})$
is equivalent to $\theta_{\ell}/\theta_{k}\leq\eta_{k\to\ell}(\mathbf{u})$ for
$\ell,k\in[K]$.
Next assume $\theta\in\mathcal{F}(\mathbf{u})$ and consider some 
implications.
First, for all $k,\ell$
\[
\frac{\theta_{\ell}}{\theta_{k}}\leq\eta_{k\to\ell}(\mathbf{u}),\quad\text{and}\quad\frac{\theta_{k}}{\theta_{\ell}}\leq\eta_{\ell\to k}(\mathbf{u}),\quad\text{thus}\quad\eta_{k\to\ell}(\mathbf{u})\eta_{\ell\to k}(\mathbf{u})\geq1.
\]
If $K\geq 3$ we can write $\theta_{\ell}/\theta_{k}$ as $(\theta_{\ell}/\theta_{j})(\theta_{j}/\theta_{k})$,
and apply a similar reasoning to obtain the inequalities
$\eta_{\ell\to k}(\mathbf{u})\eta_{k\to j}(\mathbf{u})\eta_{j\to\ell}(\mathbf{u})\geq 1$  for all $k,\ell,j$.
Overall we can write, for all $K\geq2$, with any number
$L$ of indices $j_{1},\ldots,j_{L}\in[K]$, the following
constraints:
\begin{equation}
	\forall L\in [K] \quad\forall j_{1},\ldots,j_{L}\in[K]\quad\eta_{j_{1}\to j_{2}}(\mathbf{u})
    \eta_{j_{2}\to j_{3}}(\mathbf{u})\ldots\eta_{j_{L}\to j_{1}}(\mathbf{u})\geq1.\label{eq:constraints}
\end{equation}
Hereafter we drop ``$(\mathbf{u})$'' from the notation for clarity.
The case $L=1$  gives inequalities $\eta_{k\to k}\geq1$ which are
always satisfied since $\eta_{k\to k}=1$ following \eqref{eq:definitionetas}.
Furthermore, it suffices to consider only indices $j_{1},\ldots,j_{L}$ that are
unique, otherwise 
the associated inequality in \eqref{eq:constraints} is
implied by inequalities associated with smaller values of $L$.

At this point, we observe a fruitful connection between
\eqref{eq:constraints} and directed graphs.
The indices in $[K]$ can be viewed as vertices of a fully connected directed
graph. Directed edges are ordered pairs $(j_{1},j_{2})$. We
associate the product $\eta_{j_{1}\to j_{2}}\eta_{j_{2}\to
j_{3}}\ldots\eta_{j_{L}\to j_{1}}$ with a sequence of edges, $(j_{1},j_{2})$,
$(j_{2},j_{3})$, up to $(j_{L},j_{1})$. That sequence forms a path
from vertex $j_{1}$ back to vertex $j_{1}$, of length $L$, in other words
a directed cycle of length $L$.
Define $w_{k\to\ell}=\log\eta_{k\to\ell}$ for all $k,\ell\in[K]$, and 
treat it as the weight of edge $(k,\ell)$. Then the inequality \eqref{eq:constraints} 
is equivalent to
$w_{j_{1}\to j_{2}}+w_{j_{2}\to j_{3}}+\ldots+w_{j_{L}\to j_{1}}\geq0$. The sum
of weights along a path is called its ``value''. 
The inequalities in \eqref{eq:constraints} are then equivalent to 
all cycles in the graph having non-negative values. See Figure \ref{fig:constraints} for an illustration
for $K = 3$ of the equivalent conception of constraints in
\eqref{eq:constraints}  as graph cycle values. Detecting whether graphs
contain cycles with negative values, called ``negative cycles'', can be done with the
Bellman--Ford algorithm \citep{bang2008digraphs}. 

At this point we have established that 
$\theta\in\mathcal{F}(\mathbf{u})$ implies the inequalities of
\eqref{eq:constraints}, which can be understood as 
constraints on the weights of a graph. Our next result states that the converse also holds. 
\begin{prop}
\label{prop:existence}There exists $\theta\in\Delta$ satisfying
$\theta_{\ell}/\theta_{k}\leq \eta_{k\to\ell}$ for all $k,\ell\in[K]$
if and only if the values $(\eta_{k\to\ell})$ satisfy 
\begin{equation}
	\forall L\in [K] \quad\forall j_{1},\ldots,j_{L}\in[K]\quad\eta_{j_{1}\to j_{2}}\eta_{j_{2}\to j_{3}}\ldots\eta_{j_{L}\to j_{1}}\geq1.\label{eq:inequalities:inproposition}
\end{equation}
Furthermore it suffices to restrict \eqref{eq:inequalities:inproposition} to
distinct indices $j_1,\ldots,j_L$.
\end{prop}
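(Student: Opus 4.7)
The forward implication is an immediate telescoping. Assuming $\theta \in \Delta$ satisfies $\theta_\ell/\theta_k \leq \eta_{k\to\ell}$ for all $k,\ell$, pick any $j_1,\ldots,j_L \in [K]$ and multiply the inequalities for the pairs $(j_1,j_2),(j_2,j_3),\ldots,(j_L,j_1)$; the left-hand side telescopes to $1$, yielding \eqref{eq:inequalities:inproposition}.

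For the converse I would pass to logarithms and invoke the shortest-path / potential construction on a weighted digraph. Set $w_{k\to\ell} = \log\eta_{k\to\ell}$ on the complete digraph with vertex set $[K]$. The cycle inequalities \eqref{eq:inequalities:inproposition} say exactly that every directed cycle has non-negative total weight. Fix a reference vertex, say $1$, and let $\phi_k$ be the length of a shortest directed path from $1$ to $k$, with $\phi_1 = 0$. Since the graph is finite and contains no negative cycle, these shortest-path lengths are well-defined and finite (this is the standard correctness statement for Bellman--Ford). By optimality of shortest paths, for every ordered pair $(k,\ell)$ one has $\phi_\ell \leq \phi_k + w_{k\to\ell}$. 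Put $\theta_k \propto \exp(\phi_k)$ and normalise to lie in $\Delta$ (the normalisation drops out of all ratios). Then
\[
\frac{\theta_\ell}{\theta_k} \;=\; e^{\phi_\ell - \phi_k} \;\leq\; e^{w_{k\to\ell}} \;=\; \eta_{k\to\ell},
\]
which is what was wanted.

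For the ``furthermore'' clause, any closed walk $j_1\to j_2\to\cdots\to j_L\to j_1$ containing a repetition $j_i = j_m$ with $i<m$ decomposes into two shorter closed walks, whose $\eta$-products multiply to the product along the original cycle. Iterating, every closed walk factors as a product of simple directed cycles, and simple cycles in a graph with $K$ vertices use at most $K$ distinct indices; hence it is enough to impose \eqref{eq:inequalities:inproposition} for simple cycles with $L \leq K$.

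The only non-routine step is the reverse direction, and the conceptual hurdle there is recognising that the cycle-product inequalities are precisely the ``no negative cycle'' condition, which legitimises defining $\theta$ via the shortest-path potentials $\phi_k$. Once this connection is made, the verification is a one-line computation and the algorithmic side (Bellman--Ford) even gives an effective construction. A minor hygiene point is that one should either work on the event $\{\eta_{k\to\ell} > 0 \text{ for all } k,\ell\}$, which has full $\nu_{\mathbf{x}}$-measure, or allow $w_{k\to\ell} = -\infty$ and note that such edges never lie on shortest paths when no negative cycle exists.
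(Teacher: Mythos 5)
Your proposal is correct and follows essentially the same route as the paper: the reverse direction is proved by exactly the same shortest-path potential construction (weights $\log\eta_{k\to\ell}$ on a complete digraph, no negative cycles, $\theta$ defined by exponentiating shortest-path values from an arbitrary reference vertex, and the triangle inequality of shortest paths giving $\theta_\ell/\theta_k \leq \eta_{k\to\ell}$). Your telescoping argument for the forward direction and the walk-decomposition argument for the ``furthermore'' clause likewise match what the paper establishes in the discussion preceding the proposition.
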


\begin{proof}
The proof of the reverse
implication explicitly constructs a feasible $\theta$ based on the
values $(\eta_{k\to\ell})$, assuming that they satisfy \eqref{eq:inequalities:inproposition}.
Introduce the fully connected graph with $K$ vertices, with 
weight 
$\log\eta_{k\to\ell}$ on edge $(k,\ell)$. 
Thanks to $(\eta_{k\to\ell})$ satisfying \eqref{eq:inequalities:inproposition},
there are no negative cycles thus one cannot
decrease the value of a path by appending a cycle to it.
Since there are only finitely many paths without cycles 
there is a finite minimal value over all paths from $k$ to
$\ell$, which we denote by $\min(k\to\ell)$. 
In other words \eqref{eq:inequalities:inproposition}
implies that $\min(k\to\ell)$ is finite.

We choose a vertex in $[K]$ arbitrarily, for
instance vertex $K$. We define $\theta$ by
$\theta_{k}=\exp(\min(K\to k))$ and 
then by normalizing the entries so that $\theta\in\Delta$.
We can write
$\min(K\to\ell)\leq\min(K\to k)+\log\eta_{k\to\ell}$,
because the right hand side is the value of a path from $K$ to $\ell$
(via $k$), while the left hand side is the smallest value over all
such paths. Upon taking the exponential, the above
inequality is equivalent to $\theta_{\ell}/\theta_{k}\leq\eta_{k\to\ell}$.
\end{proof}

\begin{figure}[t]
    \centering
    \subfloat[\label{fig:constraints1}]{\begin{centering}
        \includegraphics[width=0.35\textwidth]{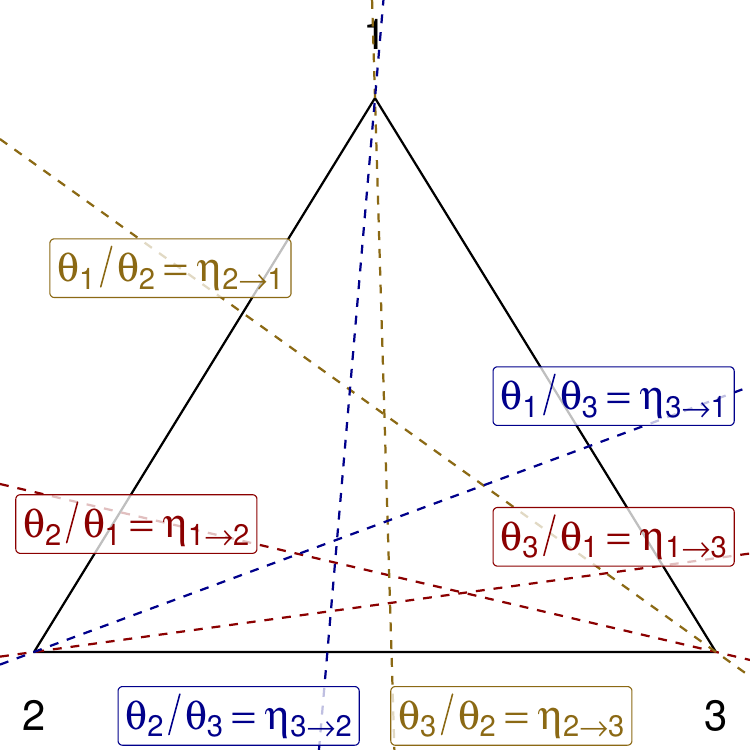}
\par\end{centering} }
\hspace*{1cm}
    \subfloat[\label{fig:constraints2}]{\begin{centering}
        \includegraphics[width=0.35\textwidth]{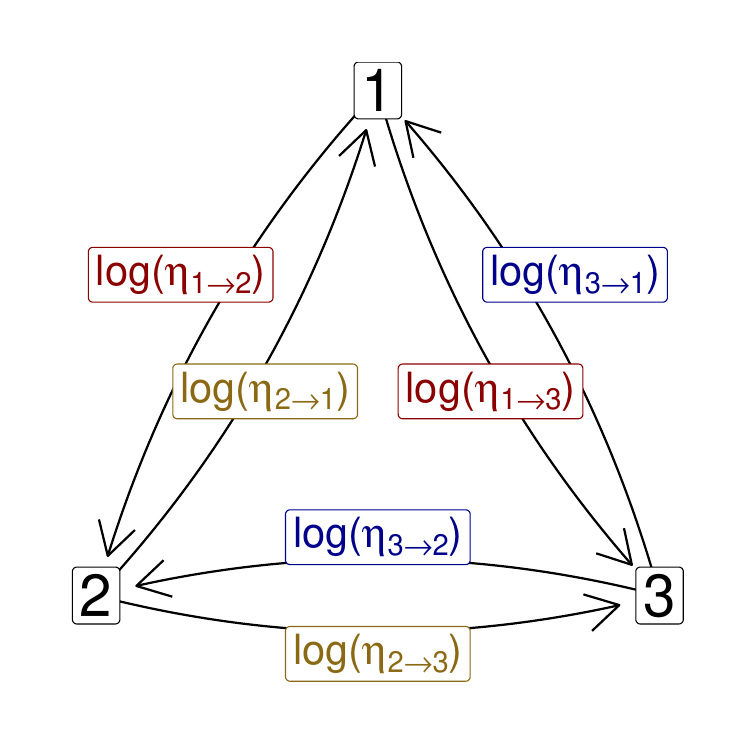}
\par\end{centering} }
\caption{ \label{fig:constraints} Two views on the constraints in
\eqref{eq:constraints}. In \ref{fig:constraints1} the values $\eta_{k\to\ell}$
define linear constraints $\theta_\ell/\theta_k = \eta_{k\to\ell}$.
In \ref{fig:constraints2} the log values are weights on the edges of a complete directed graph.}
\end{figure}

\subsection{Conditional distributions \label{subsec:conditionaldistribution}}

Thanks to Proposition \ref{prop:existence} we can write
$\mathcal{R}_{\mathbf{x}}$ defined in \eqref{eq:RN} as the set of $\mathbf{u}$
for which the values $\eta_{k\to\ell}$ satisfy
\eqref{eq:inequalities:inproposition}, with $\eta_{k\to\ell}$ defined in
\eqref{eq:definitionetas}.  We next provide a representation of the conditional
distribution of $\mathbf{u}_{\mathcal{I}_k}$ under $\nu_{\mathbf{x}}$ that is
convenient for sampling purposes.

\begin{prop}
\label{prop:conditional}
Let $\mathbf{u}=(u_{1},\ldots,u_{N})\in\mathcal{R}_{\mathbf{x}}$,
and define $\eta_{k\to\ell}=\min_{n\in\mathcal{I}_{k}}u_{n,\ell}/u_{n,k}$
for all $k,\ell\in[K]$. Let $k\in[K]$.
Define for $\ell \in [K]$,
\begin{equation}
    \label{eq:conditionaltheta}
\theta_\ell = \frac{\exp(-\min(\ell \to k))}{\sum_{\ell'\in[K]} \exp(-\min(\ell' \to k))}
\end{equation}
where $\min(\ell\to k)$ is the minimum value over all paths from $\ell$ to $k$,
in a fully connected directed graph with weight
$\log\eta_{j\to\ell}$ on edge $(j,\ell)$.
Then, $\nu_\mathbf{x}(d\mathbf{u}_{\mathcal{I}_k}|\mathbf{u}_{[N]\setminus\mathcal{I}_k})$ is 
the uniform distribution on $\Delta_{k}(\theta)^{N_k}$.
\end{prop}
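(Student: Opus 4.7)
The plan is to leverage Proposition \ref{prop:existence}, which states that $\mathbf{u}\in \mathcal{R}_{\mathbf{x}}$ is equivalent to the absence of negative cycles in the directed graph whose edges $(j,\ell)$ carry weight $w_{j\to\ell}=\log\eta_{j\to\ell}(\mathbf{u})$. Because $\nu_{\mathbf{x}}$ is uniform on $\mathcal{R}_{\mathbf{x}}$, the conditional law of $\mathbf{u}_{\mathcal{I}_k}$ given $\mathbf{u}_{[N]\setminus\mathcal{I}_k}$ is uniform on the slice of $\mathcal{R}_{\mathbf{x}}$ obtained by fixing these complementary coordinates. The proof therefore reduces to identifying this slice with the product $\Delta_k(\theta)^{N_k}$ for $\theta$ as defined in \eqref{eq:conditionaltheta}.

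The first step is to observe that only the outgoing edges from vertex $k$ depend on $\mathbf{u}_{\mathcal{I}_k}$: for $j\neq k$, $\eta_{j\to\ell}$ is a function of $(u_n)_{n\in\mathcal{I}_j}$ only. Consequently, every cycle that does not pass through $k$ has a value that is fixed by $\mathbf{u}_{[N]\setminus\mathcal{I}_k}$ and is already non-negative since $\mathbf{u}\in\mathcal{R}_{\mathbf{x}}$. Validity of the modified $\mathbf{u}$ thus reduces to non-negativity of every simple cycle through $k$. Such a cycle uses exactly one outgoing edge $(k,\ell)$ for some $\ell\neq k$, followed by a simple path from $\ell$ back to $k$ that does not revisit $k$ as an interior vertex. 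Minimizing over the return path gives the equivalent requirement
\[
w_{k\to\ell}+\min(\ell\to k)\;\geq\;0\qquad\text{for every }\ell\in[K],
\]
with the case $\ell=k$ handled by the empty path, for which $\min(k\to k)=0$. Because no negative cycle exists among the fixed edges, the minimum $\min(\ell\to k)$ is attained on a simple path and in particular does not depend on $\mathbf{u}_{\mathcal{I}_k}$.

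Exponentiating and recognizing $\exp(-\min(\ell\to k))$ as the unnormalized $\theta_\ell$ from \eqref{eq:conditionaltheta}, with $\theta_k$ equal to $1$ before normalization, the displayed inequality rewrites as $\eta_{k\to\ell}(\mathbf{u})\geq\theta_\ell/\theta_k$ for every $\ell\in[K]$, the normalizing constant cancelling in the ratio. Substituting $\eta_{k\to\ell}(\mathbf{u})=\min_{n\in\mathcal{I}_k}u_{n,\ell}/u_{n,k}$ makes the condition factorize into the pointwise inequalities $u_{n,\ell}/u_{n,k}\geq\theta_\ell/\theta_k$ for each $n\in\mathcal{I}_k$ and each $\ell\in[K]$. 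By Lemma \ref{Lemma-5.2}, this is exactly $u_n\in\Delta_k(\theta)$ for every $n\in\mathcal{I}_k$, so the slice equals $\Delta_k(\theta)^{N_k}$, and the uniformity of $\nu_{\mathbf{x}}$ on $\mathcal{R}_{\mathbf{x}}$ descends to the uniform law on this product.

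The main obstacle I anticipate is the cycle decomposition argument: one must verify carefully that the joint condition for $\mathbf{u}\in\mathcal{R}_{\mathbf{x}}$ as $\mathbf{u}_{\mathcal{I}_k}$ varies is captured exactly by the minimum-path quantities in \eqref{eq:conditionaltheta}, and that once the fixed subgraph contains no negative cycle those minima may safely be taken over simple paths rather than arbitrary walks. The remainder amounts to routine bookkeeping between the graph-theoretic formulation and the geometric one afforded by Lemma \ref{Lemma-5.2}.
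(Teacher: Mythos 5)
Your proof is correct and takes essentially the same route as the paper's: both rest on Proposition \ref{prop:existence}, the key observation that only edges leaving vertex $k$ depend on $\mathbf{u}_{\mathcal{I}_k}$ (so that $\min(\ell\to k)$ is determined by the complementary coordinates), and Lemma \ref{Lemma-5.2} to translate $\eta_{k\to\ell}\geq\theta_\ell/\theta_k$ into $u_n\in\Delta_k(\theta)$. The only difference is organizational: you characterize the conditional support in one equivalence by splitting simple cycles into those through $k$ and those avoiding $k$, while the paper verifies the two inclusions of the support separately, so the content matches.
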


In other words
$\nu_\mathbf{x}(d\mathbf{u}_{\mathcal{I}_k}|\mathbf{u}_{[N]\setminus\mathcal{I}_k})$
is the product measure with each component $u_n$ following the uniform
distribution on $\Delta_k(\theta)$, with $\theta$
defined in \eqref{eq:conditionaltheta}. The proposition 
is key to the implementation of the proposed Gibbs sampler.

\begin{proof}
We consider an arbitrary $k\in[K]$,
and assume that $\mathbf{u}\in\mathcal{R}_{\mathbf{x}}$. 
Listing the inequalities in \eqref{eq:inequalities:inproposition} that involve the index $k$ 
and separating the terms $\eta_{k\to\ell}$ from the others, 
we obtain 
\begin{align}
    \forall \ell \in [K] \quad &\quad \eta_{k\to\ell} \geq \eta_{\ell\to k}^{-1} \; ,  \label{eq:refreshconstraint1}\\
    \forall \ell \in [K] \quad \forall L\in [K-2]\quad\forall j_{1},\ldots,j_{L}\in[K]\setminus\{k,\ell\} & \quad\eta_{k\to\ell}\geq\left(\eta_{\ell\to j_{1}}\ldots\eta_{j_{L}\to k}\right)^{-1}.\label{eq:refreshconstraint2}
\end{align}
Thus, for $\mathbf{u}$ to remain in $\mathcal{R}_{\mathbf{x}}$ after updating
its components $\mathbf{u}_{\mathcal{I}_{k}}$,
it is enough to check that the ratios $u_{n,\ell}/u_{n,k}$
for $\ell\in[K]$ and $n\in \mathcal{I}_k$ are lower bounded as above.

The finiteness of $\min(\ell \to k)$
results from the same reasoning as in the proof of Proposition \ref{prop:existence}.
Note that $\min(\ell \to k)$ can be constructed without
the entries $\mathbf{u}_{\mathcal{I}_k}$ of $\mathbf{u}$, because the shortest
path from $\ell$ to $k$ should pay no attention to any directed edges that stem
from $k$, and the entries $\mathbf{u}_{\mathcal{I}_k}$ inform only the weights
of edges stemming from $k$.
Thus we can define $\theta$ as in \eqref{eq:conditionaltheta}.

We next show that the support of $\nu_\mathbf{x}(d\mathbf{u}_{\mathcal{I}_k}|\mathbf{u}_{[N]\setminus\mathcal{I}_k})$ 
is exactly $\Delta_k(\theta)^{N_k}$.
Let $\mathbf{u}_{\mathcal{I}_k} \in \Delta_k(\theta)^{N_k}$. By Lemma \ref{Lemma-5.2} 
and the definition of $\theta$, we have
\[\forall \ell \in [K] \quad \min_{n\in\mathcal{I}_k} \frac{u_{n,\ell}}{u_{n,k}} \geq \exp(-\min(\ell \to k)).\]
But $\exp(-\min(\ell \to k))=(\exp(\min(\ell\to k)))^{-1}$ is greater than $(\eta_{\ell\to j_1}\ldots \eta_{j_L\to k})^{-1}$
for any path $\ell\to j_1\ldots j_L\to k$. Thus, with $\eta_{k\to \ell} = \min_{n\in\mathcal{I}_k} u_{n,\ell}/u_{n,k}$,
inequalities \eqref{eq:refreshconstraint1}-\eqref{eq:refreshconstraint2} are
satisfied. Proposition \ref{prop:existence} 
guarantees that $\mathbf{u}$ is in $\mathcal{R}_\mathbf{x}$, thus $\Delta_k(\theta)^{N_k}$ 
is contained in the support
of $\nu_\mathbf{x}(d\mathbf{u}_{\mathcal{I}_k}|\mathbf{u}_{[N]\setminus\mathcal{I}_k})$.

Let us show the reverse inclusion by 
considering $\mathbf{u}_{\mathcal{I}_k} \notin \Delta_k(\theta)^{N_k}$.
There, for some $n\in \mathcal{I}_k$ and some $\ell\in[K]$, we have $u_{n,\ell}/u_{n,k}< \exp(-\min(\ell \to k))$.
Denote by $\ell \to j_1\ldots j_L\to k$ the path attaining the value $\min(\ell \to k)$. We obtain 
$\eta_{k\to \ell} \leq u_{n,\ell}/u_{n,k} < (\eta_{\ell\to j_{1}}\ldots\eta_{j_{L}\to k})^{-1}$,
and thus $\eta_{k\to \ell}\eta_{\ell\to j_{1}}\ldots\eta_{j_{L}\to k}<1$, in other words
some inequalities in \eqref{eq:inequalities:inproposition} are not satisfied and thus,
by Proposition \ref{prop:existence}, $\mathbf{u}$ is not in $\mathcal{R}_\mathbf{x}$.
\end{proof}

Proposition \ref{prop:conditional} 
provides a strategy to sample from 
the conditional distributions of interest,
provided that we can obtain $\theta\in\Delta$
in \eqref{eq:conditionaltheta},
which involves $\min(\ell\to k)$ for all $\ell$.
These can be obtained from shortest path algorithms such
as Bellman--Ford implemented
in \texttt{igraph} \citep{igraph}.
Alternatively we can view
$\theta$ in \eqref{eq:conditionaltheta}
as the solution of the linear program,
\begin{align}
	\text{max} \left\{\theta_{k}\colon
    \theta\in\Delta \quad\forall j\neq k\quad\forall i\neq j\quad
    \frac{\theta_{i}}{\theta_{j}}\leq\eta_{j\to i}\right\}. \label{eq:LP}
\end{align}
This has a simple interpretation: $\theta$ in \eqref{eq:conditionaltheta}
is precisely the vertex of $\mathcal{F}(\mathbf{u})$ with the largest $k$-th component.
The equivalence between shortest path problems and linear programs 
is well known. Implementations are provided 
in \texttt{lpsolve} \citep{berkelaar2004lpsolve,lpsolve2014lpsolveapi}. 

The Gibbs sampler is described in Algorithm \ref{alg:Gibbs}. 
Its outputs include $\mathbf{u}^{(t)}$
converging to $\nu_{\mathbf{x}}$ in distribution as $t\to\infty$,
as well as the associated values 
of $(\eta_{k\to\ell}^{(t)})$ from which we can obtain
the sets $\mathcal{F}(\mathbf{u}^{(t)})$
as
$\{\theta\in\Delta: \theta_\ell/\theta_k \leq \eta_{k\to\ell}^{(t)} \;\forall k,\ell\in[K]\}$.
Such sets can be stored in ``half-space representation''
or as a list of vertices in $\Delta$, obtained 
by vertex enumeration \citep{avis1992pivoting}. 
Convenient functions to store and manipulate polytopes can be found
in \texttt{rcdd} \citep{geyer2008r,fukuda1997cdd}.
We run  $100$ iterations of
the sampler and record elapsed seconds for different values of $N$ and $K$. Medians
over $50$ experiments are reported in Figure \ref{fig:Computational-time},
for counts set to $\lfloor N/K\rfloor$ in each category. 

\begin{figure}[t]
    \centering
    \subfloat[\label{fig:scalingwithk}]{\begin{centering}
        \includegraphics[width=0.35\textwidth]{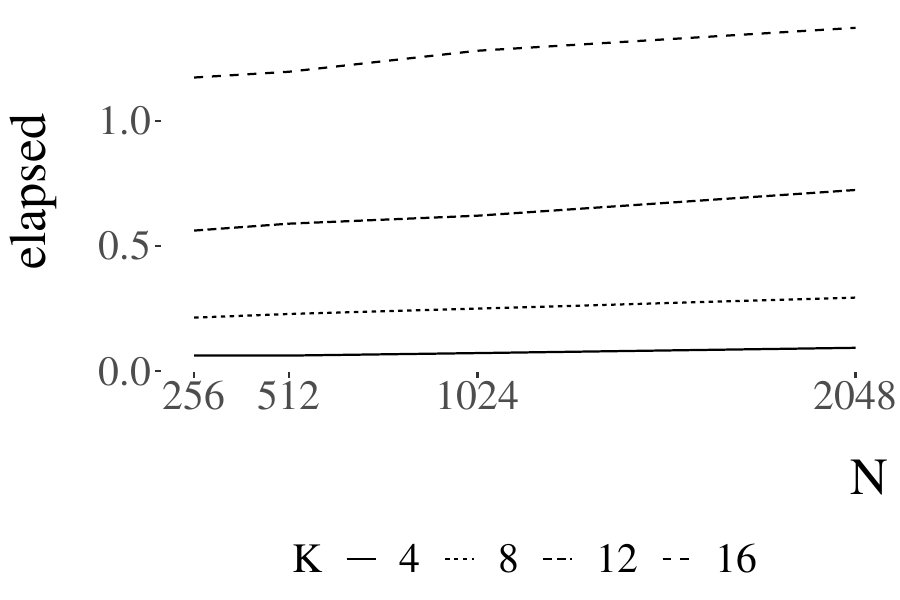}
\par\end{centering} }
\hspace*{1cm}
    \subfloat[\label{fig:scalingwithN}]{\begin{centering}
        \includegraphics[width=0.35\textwidth]{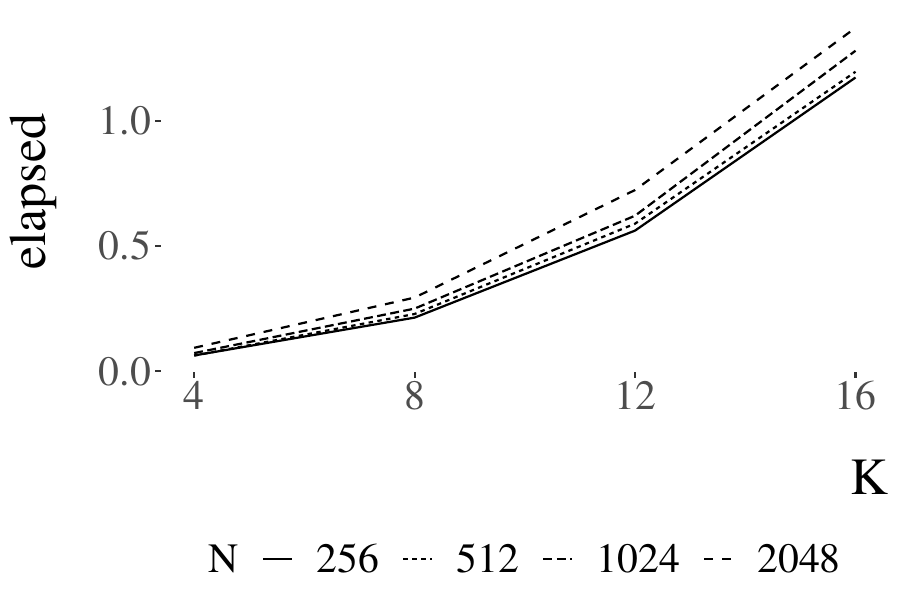}
\par\end{centering} }
\caption{\label{fig:Computational-time} Elapsed time in seconds for
100 iterations of the sampler.
In \ref{fig:scalingwithk}, elapsed time as a function of $N$, for different $K$. 
In \ref{fig:scalingwithN}, elapsed time
as a function of $K$, for different $N$.}
\end{figure}

\begin{algorithm}
	\begin{enumerate}
		\item Set $\theta^{(0)}$ in $\Delta$, 
        and for all $k\in[K]$, all $n\in\mathcal{I}_k$, sample $u_n^{(0)} \sim \Delta_k(\theta^{(0)})$
			(Algorithm \ref{alg:Uniform-sampling-in}). 
		\item Compute $\eta_{k\to\ell}^{(0)} = \min_{n\in\mathcal{I}_k} u_{n,\ell}^{(0)} / u_{n,k}^{(0)}$ for all $k,\ell\in[K]$.
		\item At iteration $t \geq 1$, 
			\begin{enumerate}
				\item Set $\eta_{k\to\ell}^{(t)} \leftarrow \eta_{k\to\ell}^{(t-1)}$ for all $k,\ell\in[K]$.
				\item For category $k\in[K]$,
					\begin{enumerate}
						\item Compute $\theta\in\Delta$ from the values $(\eta_{j\to\ell}^{(t)})$
                            according to \eqref{eq:conditionaltheta},
                        \item[] either by computing shortest paths 
                        or by solving a linear program \eqref{eq:LP}.
                        \item For each $n\in\mathcal{I}_{k}$, sample
                            $u^{(t)}_{n} \sim \Delta_{k}(\theta)$
							(Algorithm \ref{alg:Uniform-sampling-in}).
						\item Set $\eta_{k\to\ell}^{(t)}\leftarrow\min_{n\in\mathcal{I}_{k}}u^{(t)}_{n,\ell}/u^{(t)}_{n,k}$,
							for all $\ell\neq k$.
					\end{enumerate}
			\end{enumerate}
	\end{enumerate}
	\caption{\label{alg:Gibbs} Gibbs sampler for Categorical inference in the Dempster--Shafer framework. \protect \\
    Input: observations $\mathbf{x}\in[K]^N$, defining index sets $\mathcal{I}_k=\{n\in[N]: x_n = k\}$ for $k\in[K]$.\\
    Output: sequence $(\mathbf{u}^{(t)})_{t\geq 0}$ converging to $\nu_\mathbf{x}$, the uniform
    distribution on $\mathcal{R}_\mathbf{x}$.}
\end{algorithm}

\subsection{Convergence to stationarity\label{subsec:convergence}}

A common question to all MCMC algorithms is the rate of convergence to
stationary \citep{jerrum1998mathematical,roberts2004general}, which here might depend
on $K$ and the observed counts $N_1,\ldots,N_K$.  In the simplest case where
$K=2$, with counts $N_1\geq 1, N_2\geq 1$, we obtain an upper bound on the
mixing time of the chain in the 1-Wasserstein metric \citep[e.g.][]{gibbs2004}
as detailed in Appendix \ref{appx:cvgrate}.  We find that the upper bound
increases at most linearly with the total count $N=N_1+N_2$.  The extension of
this theoretical result to arbitrary $K\geq 3$ is left as an
open question.

For arbitrary $K$ we use the empirical approach of
\citet{biswas2019estimating}, that provides estimated upper bounds on the total
variation distance (TV) between $\mathbf{u}^{(t)}$ at iteration $t$ and
$\nu_{\mathbf{x}}$. These upper bounds are obtained as empirical averages over
independent runs of coupled Markov chains; see Appendix
\ref{appx:empiricalcvgrate} for a brief description of the approach.
For $K=4,8,12,16$, we construct synthetic data sets with $10$ observations in each category
and estimate upper bounds for a range of $t$ shown in Figure
\ref{fig:mixingwithk}.  The number of iterations required
for convergence seems to be stable in $K$.
Next, we set $K=5$ and consider $10,20,30,40$ counts in each category,
leading to $N$ varying between $50$ and $200$. Figure \ref{fig:mixingwithN}
shows the associated upper bounds, that increase with $N$.

\begin{figure}[t]
    \centering
    \subfloat[\label{fig:mixingwithk}]{\begin{centering}
        \includegraphics[width=0.35\textwidth]{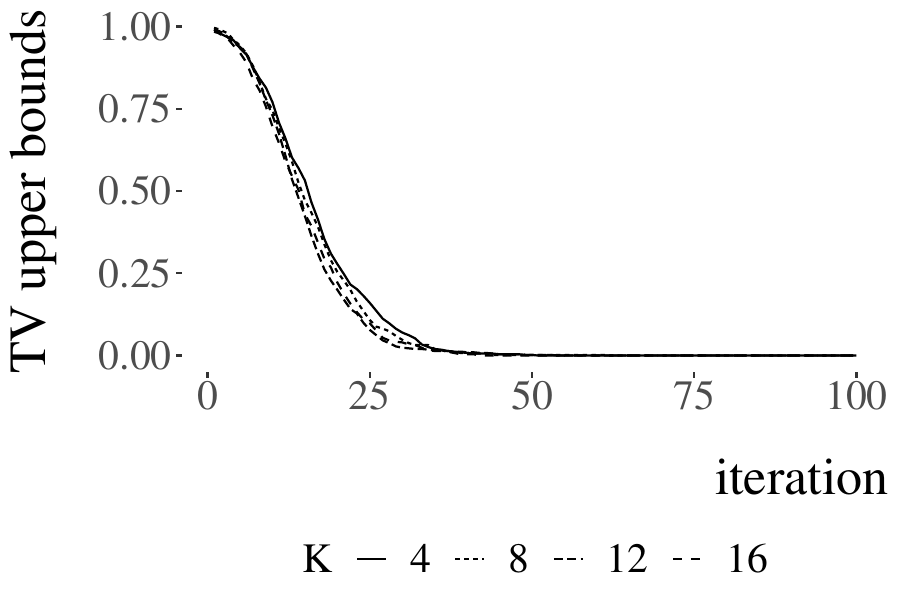}
\par\end{centering} }
\hspace*{1cm}
    \subfloat[\label{fig:mixingwithN}]{\begin{centering}
        \includegraphics[width=0.35\textwidth]{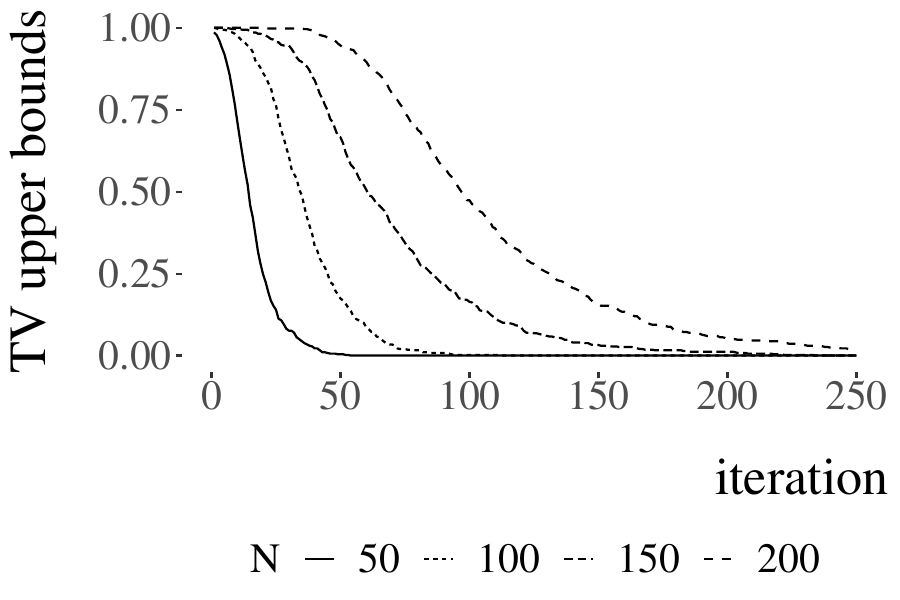}
\par\end{centering} }
\caption{\label{fig:mixing-time}Upper bounds on the TV distance
between $\mathbf{u}^{(t)}$  and
$\nu_{\mathbf{x}}$ against $t$. 
Figure \ref{fig:mixingwithk} shows varying $K$ with
$10$ counts in each category. 
Figure \ref{fig:mixingwithN} shows varying 
$N$ with $K=5$ and $N/K$ counts in each category.}
\end{figure}

\section{Adding categories, observations and priors \label{sec:manipulations}}

\subsection{Adding empty categories\label{subsec:addingremoving}}

We describe how to add empty and remove empty categories
based on the output of the Gibbs sampler.
Suppose that we have draws $\mathbf{u}$ distributed according to
the target $\nu_{\mathbf{x}}$
associated with a data set $\mathbf{x}\in[K]^{N}$ with $K$ non-empty categories.  
We add a category $K+1$ with $\mathcal{I}_{K+1}=\emptyset$, $N_{K+1}=0$,
and consider how to obtain samples $\mathbf{u}'$ from the corresponding target $\nu_{\mathbf{x}'}$.
Recall that a variable $(u_{1},\ldots,u_{K})$ following
Dirichlet$(1,\ldots,1)$
is equal in distribution to the vector with $\ell$-th entry $w_{\ell}/\sum_{j\in[K]}w_{j}$
for $\ell\in[K]$, where $(w_{\ell})_{\ell\in[K]}$ are independent
Exponential(1). Given $(u_{1},\ldots,u_{K})\sim \Delta$
consider the following procedure.
First, draw $s\sim\text{Gamma}(K,1)$, define $w_{\ell}=s\times u_{\ell}$ for $\ell\in[K]$, and draw $w_{K+1}\sim\text{Exponential}(1)$.
Then define $u'_{\ell}=w_{\ell}/\sum_{j\in[K+1]}w_{j}$ for $\ell\in[K+1]$.
The resulting vector $\mathbf{u}'=(u'_{1},\ldots,u'_{K+1})$ is uniformly
distributed on the probability simplex with $K+1$ vertices denoted by $\Delta'$. 
Since $u'_{\ell}/u'_{k}=u_{\ell}/u_{k}$
for all $k,\ell\in[K]$,
if $(u_{1},\ldots,u_{K})$
satisfies certain constraints on ratios $u_{\ell}/u_{k}$, the same constraints
are satisfied for $(u'_{1},\ldots,u'_{K+1})$. Thus $\mathbf{u}'\sim \nu_{\mathbf{x}'}$.

We can also remove empty categories. Assume that category $K+1$
is empty and that we have draws $\mathbf{u}'\sim\nu_{\mathbf{x}'}$.
For each $u'_{n}$, drop the $(K+1)$-th component $u'_{n,K+1}$, and
define $u_n$ by normalizing the remaining $K$ components. 
The resulting $\mathbf{u}$ follows $\nu_{\mathbf{x}}$.
Importantly, inferences obtained from $\nu_{\mathbf{x}}$ are not 
necessarily identical to
those obtained from $\nu_{\mathbf{x}'}$.  This is illustrated
with Figure \ref{fig:emptycategory}, showing the $(\pp,\qq,\rr)$
probabilities associated with the sets $\{\theta\colon \theta_1\in [0,c)\}$
and $\{\theta\colon \log \theta_1/\theta_2 \in (-\infty,c)\}$,
for the counts $(4,3)$ and $(4,3,0)$.

\begin{figure}[t]
    \centering
    \subfloat[\label{fig:ec1}]{\begin{centering}
        \includegraphics[width=0.35\textwidth]{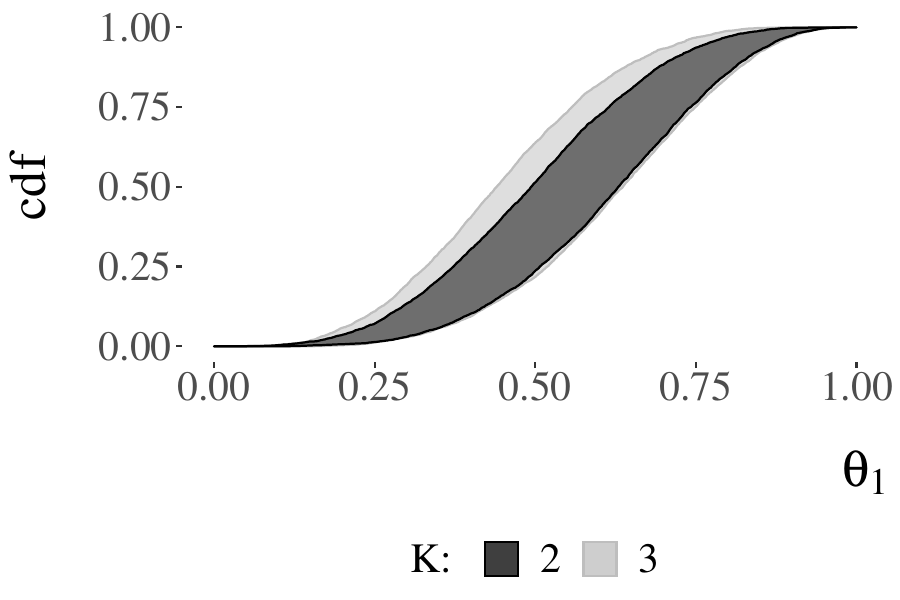}
\par\end{centering} }
\hspace*{1cm}
    \subfloat[\label{fig:ec2}]{\begin{centering}
        \includegraphics[width=0.35\textwidth]{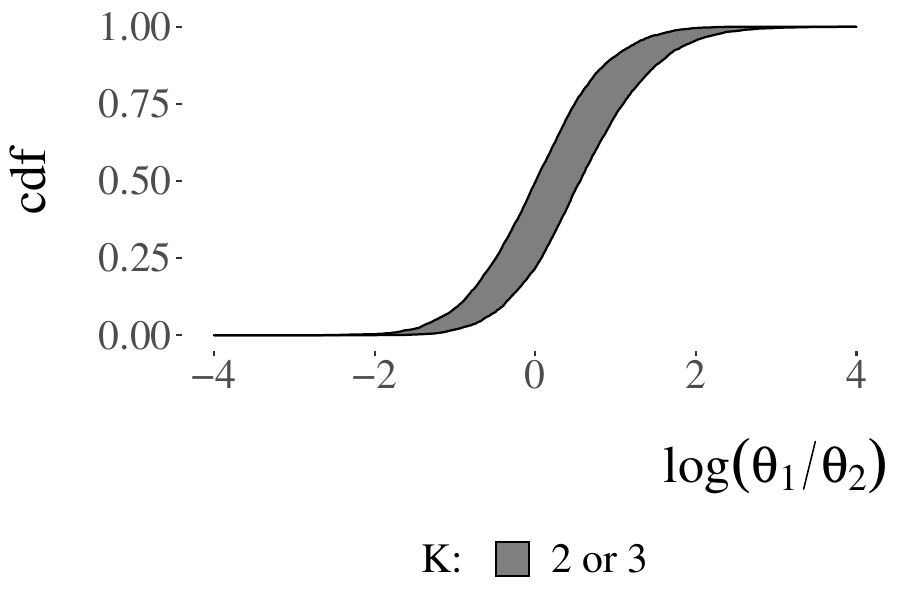}
\par\end{centering} }
\caption{ \label{fig:emptycategory} Inference on $\theta_1$ (\ref{fig:ec1}) and
on $\log (\theta_1/\theta_2)$ (\ref{fig:ec2}) using counts $(4,3)$ ($K=2$) and $(4,3,0)$ ($K=3$). Including an empty third category modifies the inference on $\theta_1$ but not on $\theta_1/\theta_2$.}
\end{figure}

\subsection{Adding partial prior information\label{subsec:partialprior}}

In the DS framework multiple sources of information can be merged
using Dempster's rule of
combination (\citet[Section 5,][]{dempster1967upper}; \citet[Section 2,][]{wasserman1990belief}).
If two sources yield random sets $\mathcal{F}$ and $\mathcal{G}$
the combination is obtained by intersections 
$\mathcal{F}\cap \mathcal{G}$,
under an independent coupling
of $\mathcal{F}$ and $\mathcal{G}$ conditional on $\mathcal{F}\cap \mathcal{G}\neq \emptyset$.
The rule of combination can be used to 
incorporate prior knowledge.
If the prior is encoded as a probability distribution on $\theta\in\Delta$,
we can view each prior draw as a singleton $\mathcal{G}$,
thus intersections $\mathcal{F}\cap \mathcal{G}$ are either singletons or empty.
It can be checked that the non-empty $\mathcal{F}\cap \mathcal{G}$
are equivalent to draws from the posterior 
by noting that, for a given $\theta\in\Delta$, 
\[
    \nu_{\mathbf{x}}\left(\{\mathbf{u}\colon
    \theta\in\mathcal{F}(\mathbf{u})\}\right)=\frac{\text{uniform}\left(\{(u_{1},\ldots,u_{N})\in\Delta^{N}:\theta\in\mathcal{F}(\mathbf{u})\}\right)}{\text{uniform}\left(\{(u_{1},\ldots,u_{N})\in\Delta^{N}:\mathcal{F}(\mathbf{u})\neq\emptyset\}\right)}\propto\theta_{1}^{N_{1}}\ldots\theta_{K}^{N_{K}},
\]
which is proportional to the Multinomial likelihood associated with
$\theta$ and $(N_{1},\ldots,N_{K})$ \citep{dempster1972class}. This justifies why 
DS can be seen as a generalization of Bayesian inference.
In $(\pp,\qq,\rr)$ for an assertion $\Sigma\in \mathcal{B}(\Delta)$ 
this leads to $\pp = \mathbb{P}(\theta\in\Sigma|\mathbf{x})$, the posterior mass 
of $\Sigma$, $\qq=1-\pp$ and $\rr=0$.

The DS framework allows the inclusion of partial prior information.  
We follow the above reasoning except that the prior is formulated as
random sets that are not necessarily singletons.  For example, we can specify a
prior on some components of $\theta$ and extend these into random subsets of
$\Delta$ by ``up-projection'' \citep{Dempster:IJAR} or ``minimal extension''
\citep[Section 2.5,][]{wasserman1990belief}. Concretely suppose that we observe
counts $(N_{1},N_{2})$ of two categories. We specify a Dirichlet prior on
$(\theta_{1},\theta_{2})$ and obtain a Dirichlet posterior.  Next we are told
that there exists in fact a third category, which we could not observe before.
This is different than being told that there is a new
category with zero counts, $N_3 = 0$, which we could handle as in Section
\ref{subsec:addingremoving}.  Up-projection of each posterior draw
$(\theta_1,\theta_2)$ onto the 3-simplex $\Delta$ goes as follows. We compute
$\eta_{1\to 2} = \theta_2/\theta_1$ and $\eta_{2\to 1} = \theta_1/\theta_2$,
and set $\eta_{3\to k} = \eta_{k\to 3}=+\infty$ for $k=1,2$.  Denote by
$\mathcal{F}$ the resulting feasible sets $\{\theta\in\Delta\colon
\theta_\ell/\theta_k \leq \eta_{k\to\ell} \; \forall k,\ell \}$.  These
sets $\mathcal{F}$ correspond to a ``minimal extension'' in
that inference on $\theta_1/\theta_2$ is unchanged,
while inference on $\theta_3$ is vacuous. Vacuous means that
for any assertion $\Sigma = \{\theta\in\Delta \colon \theta_3 \in A\}$ 
with $A\subset [0,1]$, the sets $\mathcal{F}$ result in $\pp=0,\qq=0,\rr=1$.
Using the rule of combination we can subsequently intersect such sets
$\mathcal{F}$ with independent random sets corresponding to new observations of
the three categories.  Visuals are provided in Figure \ref{fig:partialprior},
with \ref{fig:pp2} showing random sets corresponding to counts of
three categories using a partial Dirichlet(2,2) prior on
$(\theta_1,\theta_2)$.

\begin{figure}[t]
    \centering
    \subfloat[\label{fig:pp1}]{\begin{centering}
        \includegraphics[width=0.35\textwidth]{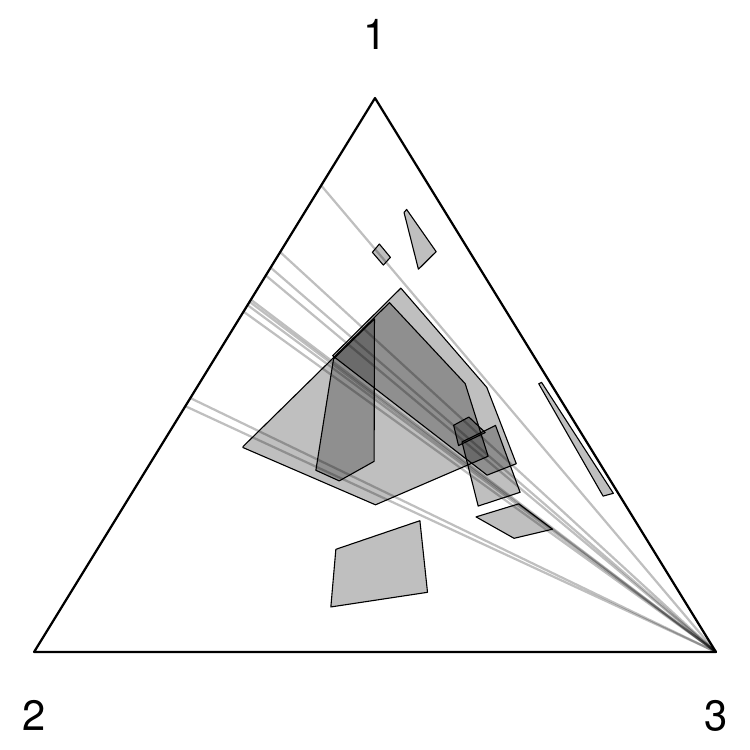}
\par\end{centering} }
\hspace*{1cm}
    \subfloat[\label{fig:pp2}]{\begin{centering}
        \includegraphics[width=0.35\textwidth]{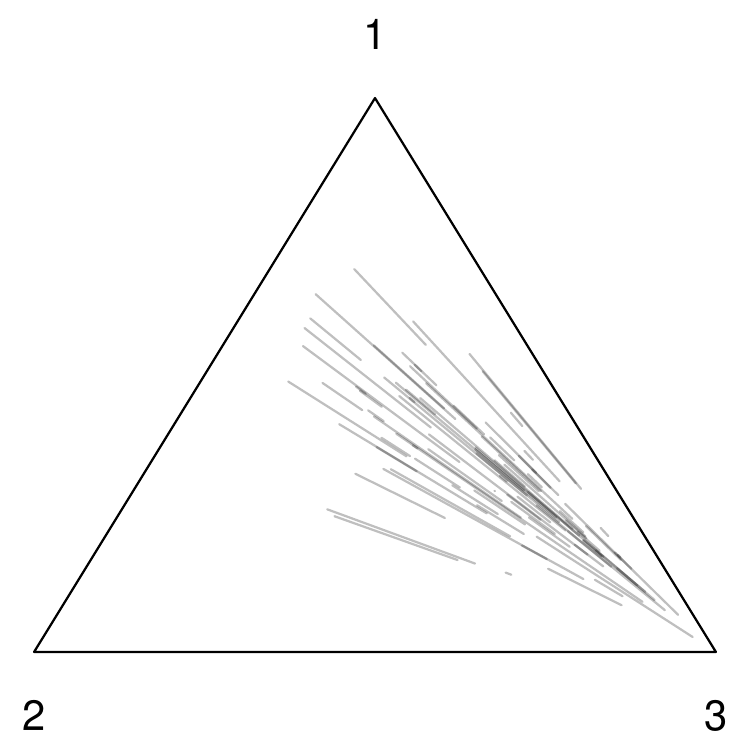}
\par\end{centering} }
\caption{ \label{fig:partialprior} Up-projection  
of posterior samples $(\theta_1,\theta_2)$,
obtained from $(N_1=8,N_2=4)$ and a Dirichlet(2,2) prior
(segments in \ref{fig:pp1}),
and feasible sets obtained independently for counts $(2,1,3)$ (polygons in \ref{fig:pp1}).
The rule of combination retains non-empty intersections of these sets (\ref{fig:pp2}). }
\end{figure}

\subsection{Adding observations \label{subsec:addingobservations}}

We consider the addition of new observations to existing categories.
Denote by $\mathbf{x}_{N+1}$ the original data $\mathbf{x}_{N}$ augmented with 
an observation $x_{N+1}$, which we assume equal to $k\in[K]$.

Any $u_{1:N+1}\in\mathcal{R}_{\mathbf{x}_{N+1}}$
is such that $u_{1:N}\in\mathcal{R}_{\mathbf{x}_N}$ and $u_{N+1}\in\Delta_{k}(\theta)$,
with $\theta\in\Delta$ constructed from $u_{1:N}$ as in Proposition \ref{prop:conditional}. 
Indeed if $u_{1:N+1}=(u_{1},\ldots,u_{N+1}) \in \mathcal{R}_{\mathbf{x}_{N+1}}$,
there exists $\theta'\in\Delta$ such that, for all $n\in[N+1]$, $u_{n,\ell}/u_{n,k}\geq\theta'_{\ell}/\theta'_{k}$.
Thus $u_{1:N} \in \mathcal{R}_{\mathbf{x}_N}$.
We can check that $u_{N+1}$ is in $\Delta_{k}(\theta)$.
Since $u_{1:N+1}\in \mathcal{R}_{\mathbf{x}_{N+1}}$,
then $(u_{n})_{n\in\mathcal{I}_k}$ belongs to the support of $\nu_{\mathbf{x}_{N+1}}(d\mathbf{u}_{\mathcal{I}_k} | \mathbf{u}_{[N+1]\setminus\mathcal{I}_k})$,
which is $\Delta_k(\theta)^{N_k}$ by Proposition \ref{prop:conditional}.
Here we have re-defined $\mathcal{I}_k = \{n\in[N+1]: x_n = k\}$.
Conversely, if $u_{1:N}\in\mathcal{R}_{\mathbf{x}_N}$ and
$u_{N+1}\in\Delta_{k}(\theta)$  then $u_{1:N+1}\in\mathcal{R}_{\mathbf{x}_{N+1}}$; again 
because 
$\Delta_{k}(\theta)$ is precisely the support of
$\nu_{\mathbf{x}_{N+1}}(du_{N+1} | \mathbf{u}_{[N+1]\setminus\mathcal{I}_k})$.

This motivates an importance sampling strategy. 
For $u_{1:N} \sim \nu_{\mathbf{x}_N}$, generate $u_{N+1} \sim \Delta_k(\theta)$,
with $\theta\in\Delta$ as above.
Denote this distribution by $q_{N+1}(du_{N+1}|u_{1:N})$.
The density $u_{N+1}\mapsto q_{N+1}(u_{N+1}|u_{1:N})$
equals $(\theta_k)^{-1}$ for $u_{N+1}\in \Delta_k(\theta)$,
since the volume of $\Delta_k(\theta)$ is $\theta_k$.
We can correct for the discrepancy between proposal and target by 
computing weights 
\begin{align*}
    w_{N+1}(u_{1:N+1})=\frac{\nu_{\mathbf{x}_{N+1}}(u_{1:N+1})}{\nu_{\mathbf{x}_N}(u_{1:N})q_{N+1}(u_{N+1}|u_{1:N})} &
    =\frac{Z_N}{Z_{N+1}}\text{Vol}(\Delta_{k}(\theta)),
\end{align*}
where $Z_N$ is the volume of $\mathcal{R}_{\mathbf{x}_N}$.
We can thus implement self-normalized importance sampling \citep{owenmcbook}.
The reasoning can be extended to assimilate observations 
recursively with a sequential Monte Carlo sampler \citep{del2006sequential},
alternating importance sampling and Gibbs moves. 
This strategy will be employed in Section \ref{subsec:2by2}.

\section{Applications \label{sec:application}}

We present two applications. In both examples, the $(\pp, \qq, \rr)$
probabilities require distributional information about the entire random
polytopes, and not only the extreme vertices elicited in
\cite{dempster1972class}.  Both examples involve $K=4$ categories and curves in
the simplex shown in Figure \ref{fig:surfaces}.
Our main objective is to illustrate the output of the algorithm. We briefly
recall from Section~\ref{sec:categorical} that the inferred $\pp$ and $\qq$
probabilities can be understood as the degree of evidential support ``for'' or
``against'' the hypothesis of interest based on available observations and the
model specification. The $\rr$ probability, which is a distinctive feature of
DS compared to standard Bayes, indicates a degree of epistemological
indeterminacy, with a larger value encouraging the analyst to suspend judgment
about the assertion of interest. The $\rr$ probability can be useful to make
decisions or to postpone them, as with other types of imprecise probabilities
and robust Bayesian analysis \citep{berger1994overview}.  The way
that decisions can be informed by DS uncertainties has received some attention, for example
see Section 12 of \citet{shafer1990perspectives}, and also
\citet{yager1992decision,bauer1997approximation}.

\begin{figure}[t]
    \centering
    \subfloat[\label{fig:indepsurface}]{\begin{centering}
        \includegraphics[width=0.35\textwidth]{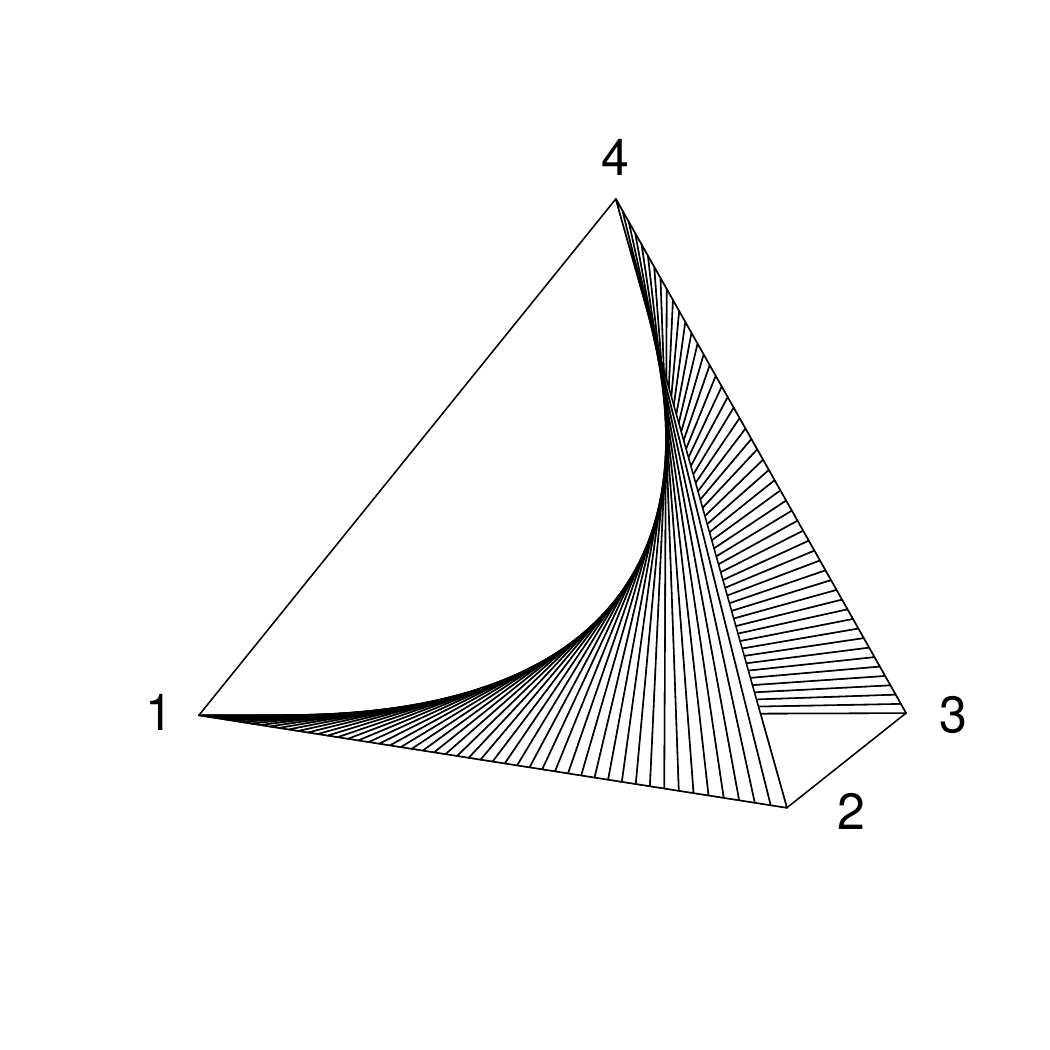}
\par\end{centering} }
\hspace*{1cm}
    \subfloat[\label{fig:linkagesurface}]{\begin{centering}
        \includegraphics[width=0.35\textwidth]{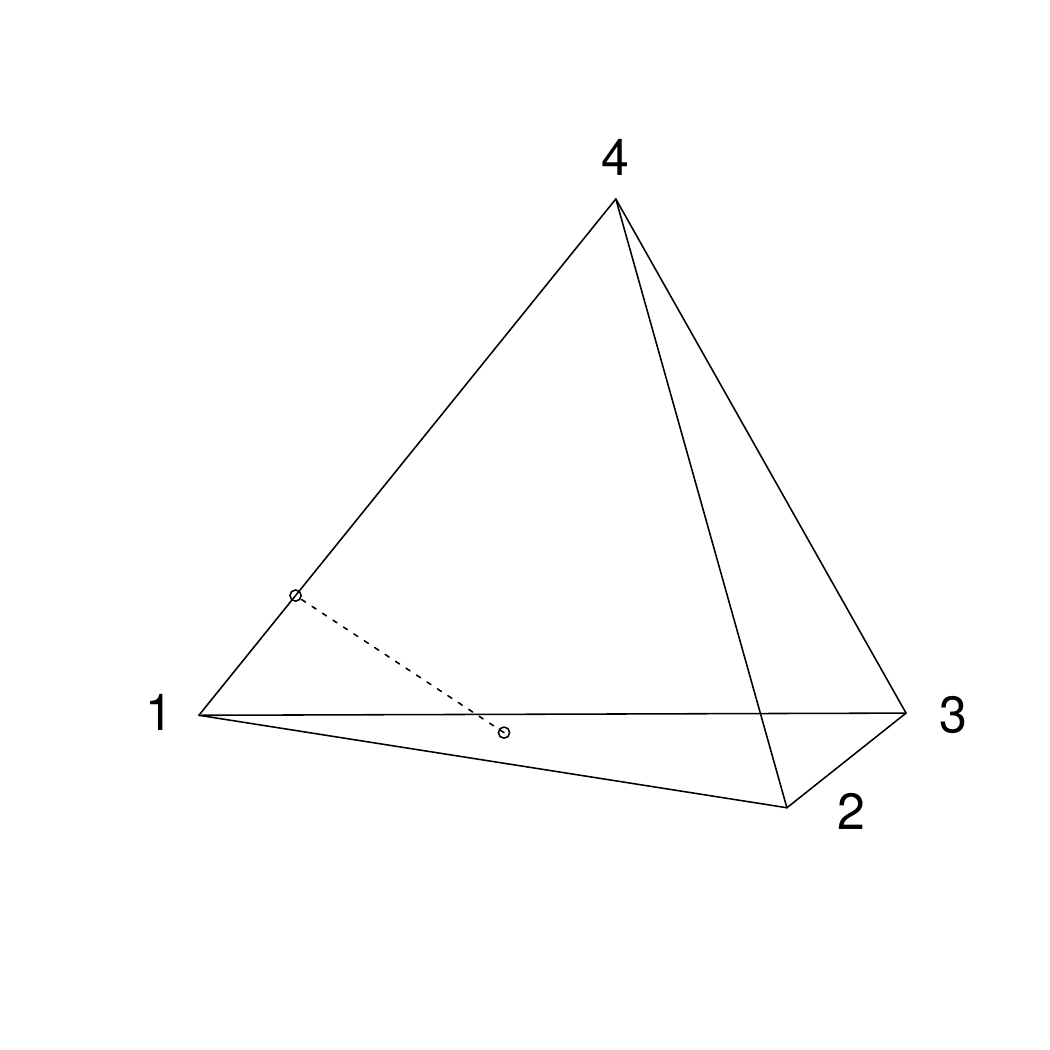}
\par\end{centering} }
\caption{ \label{fig:surfaces} Two surfaces in the 4-simplex. 
\ref{fig:indepsurface} shows the independence surface 
$\theta_1 \theta_4 = \theta_2 \theta_3$ \citep{fienberg1970geometry}. 
\ref{fig:linkagesurface} shows the linkage constraint 
of \eqref{eq:phi-constraint} for $\phi \in (0,1)$ as a dashed segment.}
\end{figure}

\subsection{Testing independence}\label{subsec:2by2}

In the case of $K=4$, count data may be arranged in a $2\times 2$
table with proportions $(\theta_{1},
\theta_{2}, \theta_{3}, \theta_{4})$ as cell probabilities, row by row.
We may be
interested in testing independence, $H_0:
\theta_{1} \theta_{4} = \theta_{2} \theta_{3}$, see \citet[Chapter 15,][]{wasserman2013all}.
Classic tests include the Pearson's chi-squared
test with $\chi^2 = \sum_{i,j}{(x_{ij} - e_{ij})^2}/{e_{ij}}$, where
$e_{ij}$ is the expected number of counts in cell
``$ij$'' under $H_0$. The Pearson test statistic is asymptotically
$\chi^2_1$. The likelihood ratio test with 
statistic $G^2 =  2\sum_{i,j} x_{ij} \log(x_{ij}/e_{ij})$, is asymptotically
equivalent; see \citet{diaconis1985testing} for further interpretations. 

Evaluating the posterior probability of $H_0$ 
raises the issue that the set $\{\theta\in\Delta\colon \theta_{1}
\theta_{4} = \theta_{2} \theta_{3}\}$, a surface in the 4-simplex as depicted
in Figure~\ref{fig:indepsurface}, might be of zero measure under the
posterior.  
As a remedy one can employ Bayes factors
\citep[e.g.][]{albert1983bayesian},
or we can consider the
evidence towards either positive or negative association,
i.e. $H_+: \theta_{1} \theta_{4} \geq \theta_{2} \theta_{3}$ or $H_-:
\theta_{1} \theta_{4} \leq \theta_{2} \theta_{3}$, 
and interpret such evidence 
as being against independence.

We consider the data set presented in 
\citet[p.191]{rosenbaum2002observational} 
regarding the effect of drainage pits on incident survival in the London
underground. Some stations are equipped with drainage pits below the
tracks. Passengers who accidentally fall off the 
platform may seek refuge in the pit to avoid an incoming train.
For stations without a pit, only 5 lived out of 21
recorded incidents. In the presence of a pit, 18 out of 32 lived.
\cite{ding2019model} reanalyzed the data to assess the difference in mortality
rates. Their analysis suggests that the
existence of a pit significantly increases the chance of survival.
The data can be summarized as counts $(16, 5, 14, 18)$.  
Pearson's chi-squared test statistic is $\chi^2 = 5.43$ with a p-value of $0.02$, 
while the likelihood ratio test yields a p-value of $0.017$.
The Bayesian analysis shows strong evidence for positive association,
with posterior probabilities
${P}(H_+ \mid {\bf x}) = 0.99$
and ${P}(H_- \mid {\bf x}) = 0.01$.

The DS approach applied sequentially yields the results 
shown in Figure \ref{fig:sequential}.
The horizontal axis shows the observations, in an arbitrary order.
The dark ribbon 
tracks $\pp(H_+)$ and $(1-\qq(H_+))$ by its lower and upper rims, respectively.
The ``don't know'' probability $\rr(H_+)$,
represented by the width of the ribbon,
can be seen to progressively shrink,
but not systematically.
The support for $H_+$ increases with each
observation in $\{1,4\}$ and decreases with each observation in $\{2,3\}$
(as highlighted with background shades).
Figure \ref{fig:sequential} is inspired by Figure 4 of \citet{walley1996analysis}.
In DS inference, the width of the ribbon is part of the inference and could be
used, for example, to inform decisions about the collection of additional
data.

\begin{figure}[t]
    \centering
        \includegraphics[width=0.9\textwidth]{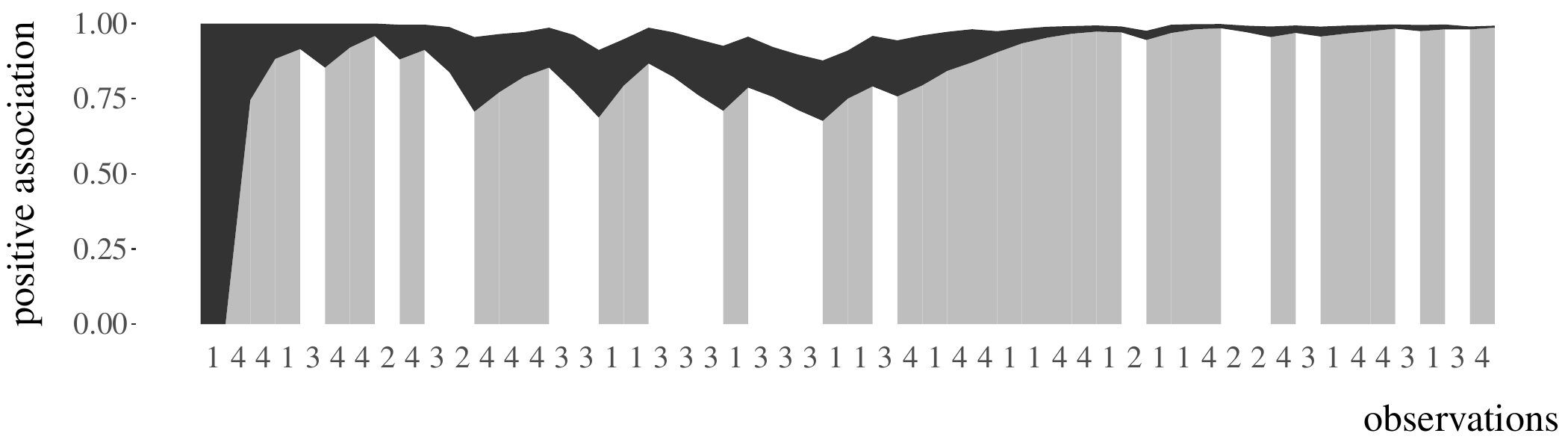}
\caption{ \label{fig:sequential} Support for the hypothesis of positive association
$H_+: \theta_1 \theta_4 \geq \theta_2 \theta_3$ as observations in $\{1,2,3,4\}$ 
are incorporated one by one. The dark ribbon delineates the probability $\pp$ for $H_+$, and one minus the support against it, respectively as its lower and upper rims. The width of the ribbon represents the amount of ``don't know'' about the hypothesis. }
\end{figure}

\subsection{Linkage model \label{subsec:linkage}}

The linkage model from  \citet[pp.368-369]{rao1973linear} was considered by
\cite{lawrence2009new}, as an example illustrating inference 
with an additional constraint. 
They compare the Imprecise Dirichlet Model (IDM) of
\cite{walley1996inferences} and their method termed Dirichlet DSM
(for Dempster--Shafer Model). The
data consist of $N = 197$ counts over $K = 4$ categories,
with probabilities satisfying
\begin{equation}
	{\theta}(\phi) = \left( \frac{1}{2} + \frac{\phi}{4}, \frac{1-\phi}{4}, \frac{1-\phi}{4}, \frac{\phi}{4} \right),\label{eq:phi-constraint}	
\end{equation}
for some $\phi \in (0,1)$. In other words, 
$\theta(\phi) = A\phi+b$ for appropriately defined $4\times 1$ matrices $A$ and $b$,
as shown in 
Figure \ref{fig:linkagesurface}.
The original observations were $(125,
18, 20, 34)$, but \cite{lawrence2009new} considered the counts 
$(25, 3, 4, 7)$, which results in a more visible 
amount of ``don't know'' probability. 

We briefly introduce Dirichlet DSM and 
focus on the comparison between the approaches.  They
differ by the choice of sampling mechanism: instead of using the
mechanism described in \citet{Dempster66}, 
\citet{lawrence2009new} introduced another mechanism 
in order to make inference simpler
computationally.
For a vector of counts $\left(N_1,...,N_K\right)$, the Dirichlet
DSM model expresses its posterior inference for the proportion vector
${\theta}$ via the random feasible set
$
\{ {\theta} \in \Delta: \theta_1 \ge z_1, ..., \theta_K \ge z_K\},
$
where ${\bf z} = \left(z_0, z_1, ..., z_K\right) \sim \text{Dirichlet}_{K+1}(1,
N_1,...,N_K)$. Incorporating the parameter constraint $\theta = A\phi+b$, the
feasible set for $\phi$ is $[\phi_{\min}({\bf z}),\phi_{\max}({\bf z})]$
with 
\begin{equation}
\phi_{\min}({\bf z}) \equiv \max\left(4z_1 - 2, 4z_4\right) \le \phi_{\max}({\bf z}) \equiv \min\left(1 - 4z_2, 1 - 4z_3\right). \label{eq:idm-phi-constraint}	
\end{equation}

For the approach of \citet{Dempster66}, termed ``Simplex-DSM'' in \citet{lawrence2009new}, 
we first run the proposed Gibbs sampler without taking into
account the linear constraint \eqref{eq:phi-constraint}. 
Among the generated feasible sets,
only those that intersect with the linear constraint are retained, and an
interval $[\underline{\phi},  \overline{\phi}]$ is obtained for each such set,
where
\[
\underline{\phi}=\text{argmin}_{\phi}\left\{ \theta\left(\phi\right)\in\mathcal{F}\left({\bf u}\right)\right\}, \qquad
 \overline{\phi}=\text{argmax}_{\phi}\left\{ \theta\left(\phi\right)\in\mathcal{F}\left({\bf u}\right)\right\}. 
\]
For the data considered here, this retains $5\%$ of the
iterations, and is therefore a practical solution. However
the approach would become impractical if the counts were much less
``compatible'' with the linkage constraint, in which case novel computational methods would be necessary.
We estimate $(\pp,\qq,\rr)$ for sets $\{\phi\in [0,c)\}$ for
$c\in(0,1)$, i.e. lower and upper cumulative
distribution functions, under both approaches and represent them 
in Figure \ref{fig:linkage:cdf}. The plot shows the
overall agreement between the two approaches. Figure \ref{fig:linkage:rcdf}
highlights the difference in $\rr$ values, and illustrates
that multiple approaches within the DS framework lead to different results.

\begin{figure}[t]
    \centering
    \subfloat[\label{fig:linkage:cdf}]{\begin{centering}
        \includegraphics[width=0.35\textwidth]{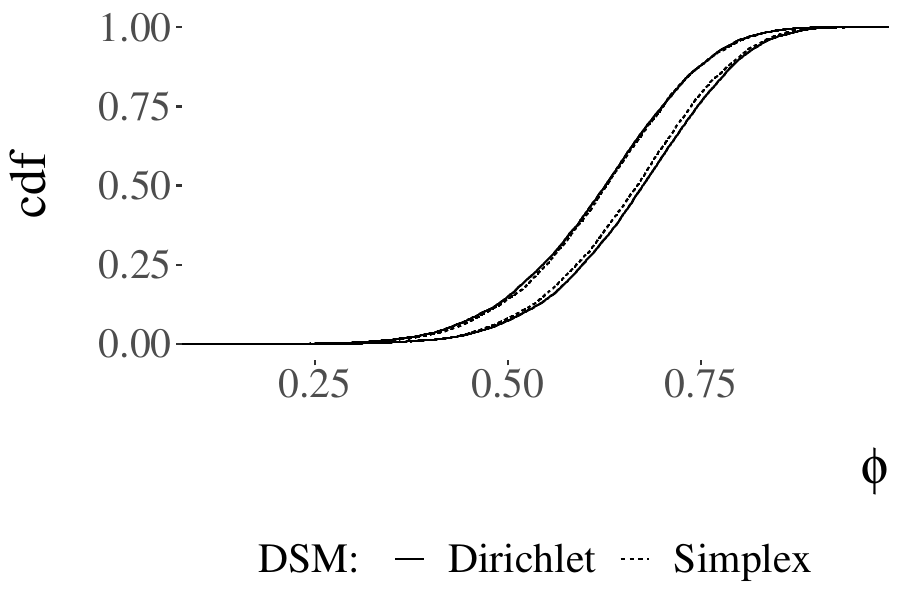}
\par\end{centering} }
\hspace*{1cm}
    \subfloat[\label{fig:linkage:rcdf}]{\begin{centering}
        \includegraphics[width=0.35\textwidth]{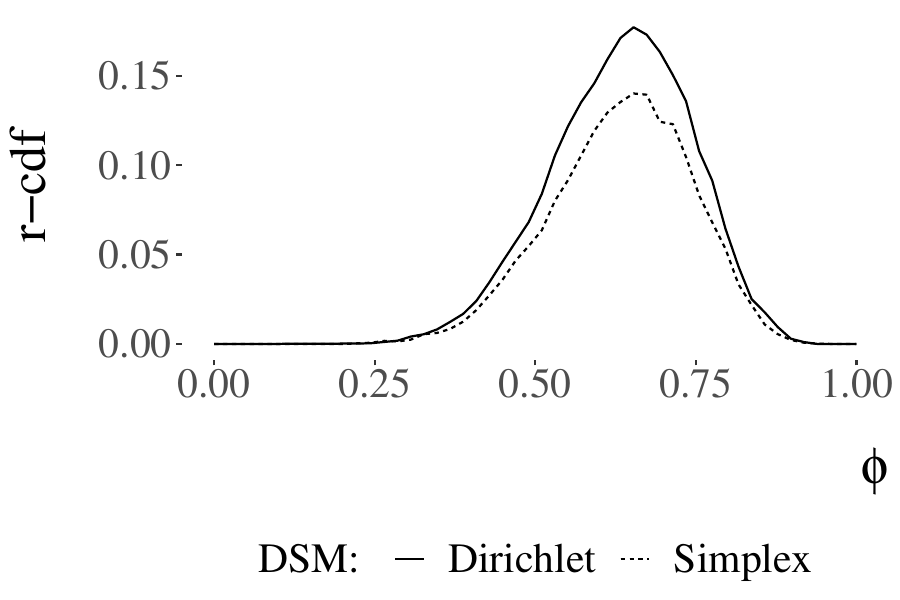}
\par\end{centering} }
\caption{\label{fig:linkage} ``Dirichlet-DSM'' approach of \citet{lawrence2009new} and the original approach of \citet{Dempster66}, for the linkage model with data $(25,3,4,7)$, the latter implemented with the proposed Gibbs sampler. Here \ref{fig:linkage:cdf} shows the lower and upper probabilities for  assertions of the form $\{\phi < c\}$ for increasing $ c \in [0, 1]$, while \ref{fig:linkage:rcdf} depicts  the difference between these upper and lower probabilites, or equivalently the $\rr$ values.}
\end{figure}

\section{Discussion\label{sec:discussion}}

The discipline of statistics does not have a single framework
for parameter inference. The setting of count data is rich enough to contrast various 
approaches.  Before any other considerations,
for a framework to be useful to scientists and decision-makers,
the ability to perform the associated computation is essential, and
allows for grounded discussions and concrete comparisons. 
Our work helps with the computation in the DS framework
for Categorical distributions,
which will hopefully motivate further theoretical
investigations of its statistical features.

One of the appeals of the DS framework is its flexibility to incorporate types
of partial information which are difficult to express in the Bayesian
framework. This includes vacuous or partial priors, coarse data which arise
from imprecise measurement devices and imperfect surveys.  These elements can
be represented as random sets
\citep{nguyen2006introduction,plass2015statistical} in the DS framework while
circumventing assumptions about the coarsening mechanism
\citep[e.g.][]{heitjan1991ignorability}.

Whether a perfect sampler could be devised 
as an alternative to the proposed Gibbs sampler is an open question.
Generic algorithms for uniform sampling
on polytopes \citep{vempala2005geometric,narayanan2016randomized,chen2018fast} 
could also provide competitive results.
The proposed Gibbs sampler could itself be accelerated, for instance by
using warm starts in the linear program solvers over subsequent iterations.

The typical challenge of DS computations is the generation of non-empty
intersections of random sets. The proposed Gibbs sampler can be seen as a way
of avoiding inefficient rejection samplers in the setting of inference in
Categorical distributions.  It remains to see whether similar ideas can be used
to deploy the DS framework in other models, for example to avoid rejection
sampling in the linkage model, or in 
hidden Markov models and models with moment constraints \citep{chamberlain2003nonparametric,bornn2019moment},
which are natural extensions of the Categorical distribution.

\paragraph*{Acknowledgements}
The authors thank Rahul Mazumder for useful advice on linear programming. 
The plots were generated using \texttt{ggplot2} \citep{ggplot2citation}
and \texttt{R} \citep{Rsoftware}.
The authors gratefully acknowledge support from the National Science Foundation
(DMS-1712872, DMS-1844695, DMS-1916002), and the National Institute of Allergy
and Infectious Disease at the National Institutes of Health [2 R37 AI054165-11
and 75N93019C00070]. The content is solely the responsibility of the authors
and does not necessarily represent the official views of the National
Institutes of Health.

\bibliographystyle{abbrvnat}
\bibliography{fid}

\appendix
\section{The choice of sampling mechanism}\label{appx:sampling-mechanism}

In DS inference, the sampling mechanism of the observable data is defined by a
structural equation which deterministically associates the data $x$ with the
parameter $\theta$ and an auxiliary random variable $u$, of an appropriate
dimension and a known distribution. Inference for $\theta$ is obtained by
inverting the structural equation and fixing $x$ at its observed value,
resulting in a map from $u$ to subsets of the parameter space.  In the present
work, we have denoted this map as $\mathcal{F}$, suppressing its dependence on
$x$.

Different sampling mechanisms may result in different
DS inference for the parameter $\theta$, even if both mechanisms correspond to 
the same likelihood of the data $x$ as a function of $\theta$.
We give an example in the case $K=2$, and consider two sampling
mechanisms corresponding to $x \sim
\text{Binomial}(N,\theta)$, where $\theta\in[0,1]$. 
The first sampling mechanism is essentially the same
as that described in the main text,
but adapted to the case where $K=2$. 
We define $u$ as an $N$-vector $(u_1,\ldots,u_N)$
of independent Uniform$(0,1)$ variables,
and the structural equation
\begin{equation}\label{eq:binom1}
  x = \sum_{n=1}^{N}{\mathds{1}}\left( u_{n}\leq \theta\right).
\end{equation}
Upon observing $x$, inverting the equation gives a map from $(u_1,\ldots,u_N)\in[0,1]^N$ to 
measurable subsets of $[0,1]$. The random sets
are of the form 
$[u_{\left(x\right)},u_{\left(x+1\right)}]$,
random closed intervals with end points given by the $x$-th and $x+1$-th
order statistics of $u_1,\ldots,u_N$, 
with the convention $u_{(0)}=0, u_{(N+1)}=1$.

The second sampling mechanism 
amounts to an inverse transform method applied to the Binomial distribution,
and involves
the quantile function $Q_{a,b}$ of the $\text{Beta}(a,b)$ distribution.
Consider the equation
\begin{equation}\label{eq:binom2}
  x=\sum_{n=0}^{N-1}\mathds{1}\left(Q_{n+1,N-n}\left(u\right) \leq \theta\right),
\end{equation}
where $u\sim \text{Uniform}(0,1)$. 
Using the fact that $(x,u)\mapsto Q_{x+1,N-x}(u)$
is increasing in $x$ for fixed $u$,
it can be shown that $x$ follows a
$\text{Binomial}(N,\theta)$.
For fixed $x$,
the random sets of interest are of the form 
\begin{equation}\label{eq:binom-inverse2}
  \left[Q_{x,N-x+1}\left(u\right),Q_{x+1,N-x}\left(u\right)\right],
\end{equation}
and we observe that the endpoints are distributed marginally
as the endpoints of 
$[u_{\left(x\right)},u_{\left(x+1\right)}]$,
$\text{Beta}(x,N-x+1)$
and $\text{Beta}(x+1,N-x)$ respectively. The random interval of the form \eqref{eq:binom-inverse2} is employed for Binomial inference by the Inferential Model framework \citep[Section 9.3.4]{martin2015inferential}.
We observe from \eqref{eq:binom-inverse2}
that the length of these intervals is deterministic given either endpoint,
which is not the case for the intervals of the form $[u_{\left(x\right)},u_{\left(x+1\right)}]$.
Therefore the $(\pp,\qq,\rr)$ triples associated with assertions of interest
on the parameters are in general different.

\section{Sampling mechanism and Gumbel-max trick} \label{appx:sampling-gumbelmax}

We observe an equivalence between the mechanism to sample Categorical
random variables from \citet{Dempster66} 
and the ``Gumbel-max'' trick \citep{maddison2014sampling,maddison2016concrete,JangGuPoole,paulus2020gradient}.

As in the main text, we can sample from a Categorical
distribution with parameters $\theta_1,\ldots,\theta_K$ as follows:
sample $w$ uniformly in the simplex $\Delta$, and output the integer $k\in[K]$
such that $w\in \Delta_k(\theta)$. To sample $w$ uniformly in $\Delta$,
we can sample $\tilde{w}_1,\ldots,\tilde{w}_K$ as independent Exponential(1),
and normalize: the $\ell$-th component $w_\ell$ of $w$ is $\tilde{w}_\ell / \sum_{j\in[K]} \tilde{w}_j$.
According to Lemma 5.2 in \citet{Dempster66},
the output $k\in[K]$ is such that 
$w_\ell/w_k \geq \theta_\ell / \theta_k$ for all $\ell\in [K]$.
Taking the logarithm and rearranging, this is equivalent to
\[\forall \ell \in [K]\quad \log \theta_\ell - \log \tilde{w}_\ell \leq \log \theta_k - \log \tilde{w}_k.\]
Since each $\tilde{w}_\ell$ is Exponential(1), $- \log \tilde{w}_\ell$ is a standard Gumbel.
Therefore the procedure is equivalent to selecting
\[k = \text{argmax}_{\ell \in[K]} \log \theta_\ell + G_\ell,\]
where $(G_\ell)$ are independent Gumbel. This is exactly the ``Gumbel-max trick'', see 
equation (2) in \citet{maddison2014sampling}.
An appeal of that sampling mechanism 
is that the categories ``are to be treated without regard to order'' \citep{Dempster66},
whereas the more common ways of sampling from a Categorical distribution
involve ordering the categories.
A similar appeal of the Gumbel-max trick 
is described in \citet{oberst2019counterfactual}.

\section{Convergence rate when $K=2$} \label{appx:cvgrate}

We consider the convergence of the proposed Gibbs sampler
in the case of two categories, $K=2$.
The analysis quantifies the impact of the counts $(N_1,N_2)$
on the convergence of the chain, which aligns with numerical
experiments shown in the main text. In the case $K=2$
the Gibbs sampler is of course unnecessary, as we can sample 
the feasible sets simply by sorting Uniform draws, as described in Appendix \ref{appx:sampling-mechanism}.

In the case $K=2$, $\Delta$ is the interval $[0,1]$ and the Gibbs sampler takes a particularly
explicit form. We assume $N_1\geq 1,N_2\geq 1$. The conditional distributions of $u_{n},n\in\mathcal{I}_{k}$
given all $u_{n},n\in\mathcal{I}_{j}$ under $\nu_{\mathbf{x}}$ are
\begin{align*}
\nu_{\mathbf{x}}((u_{n})_{n\in\mathcal{I}_{1}}|(u_{n})_{n\in\mathcal{I}_{2}}) & \propto\prod_{n\in\mathcal{I}_{1}}\mathds{1}(u_{n}\in(0,\min_{m\in\mathcal{I}_{2}}u_{m})),\\
\nu_{\mathbf{x}}((u_{n})_{n\in\mathcal{I}_{2}}|(u_{n})_{n\in\mathcal{I}_{1}}) & \propto\prod_{n\in\mathcal{I}_{2}}\mathds{1}(u_{n}\in(\max_{m\in\mathcal{I}_{1}}u_{m},1)),
\end{align*}
so that both conditionals are products of Uniform distributions.
The variables $\max_{n\in\mathcal{I}_{1}}u_{n}$
and $\min_{n\in\mathcal{I}_{2}}u_{n}$ 
can be sampled from directly from translated and scaled Beta distributions.
The Gibbs sampler alternatively updates these
two variables, which we denote by $Y$ and $Z$ below. The
procedure is described in Algorithm \ref{alg:firstkind}. Note that
$Y^{(t)}\leq Z^{(t)}$ almost surely for all $t\geq 0$ and
the feasible sets ``$\mathcal{F}(\mathbf{u})$''  are the intervals $[Y^{(t)},Z^{(t)}]$. 

\begin{algorithm}
\begin{itemize}
\item Initialization: e.g. set $Y^{(0)}=1/2$, $Z^{(0)}=1/2$.
\item At iteration $t\geq1$,
\begin{itemize}
\item Draw $Y^{(t)}=Z^{(t-1)}\times\text{Beta}(N_{1},1)$,
\item Draw $Z^{(t)}=Y^{(t)}+(1-Y^{(t)})\times\text{Beta}(1,N_{2})$.
\end{itemize}
\item Output chain $(Y^{(t)},Z^{(t)})_{t\geq0}$.
\end{itemize}
\caption{\label{alg:firstkind}Gibbs sampler for the Bernoulli model, $K=2$.}
\end{algorithm}

We can re-write the evolution of $Z^{(t)}$ as 
\[Z^{(t)} = B_1(1-B_2)Z^{(t-1)} + B_2,\]
where $B_1\sim \text{Beta}(N_1,1)$
and $B_2\sim \text{Beta}(1,N_2)$ are independent. Then we recognize
that the chain $(Z^{(t)})$
is an auto-regressive process with random coefficients.
We can analyze its rate of convergence with a coupling strategy.
Introduce another chain $(\tilde{Z}^{(t)})$, started at stationarity
and evolving with the same random input as $(Z^{(t)})$.
Then we can write 
\[
  \mathbb{E}\left[\left|Z^{(t)}-\tilde{Z}^{(t)}\right| \mid Z^{(t-1)},\tilde{Z}^{(t-1)}\right]=\mathbb{E}\left[B_{1}(1-B_{2})\right]\cdot |Z^{(t-1)}-\tilde{Z}^{(t-1)}|.
\]
Since $B_1$ and $B_2$ are independent Beta distributions, by induction we can compute
\[
  \mathbb{E}\left[\left|Z^{(t)}-\tilde{Z}^{(t)}\right|\right]\leq 
  \left(\frac{N_1}{N_{1}+1}
  \times \frac{N_2}{N_{2}+1}\right)^{t} \cdot \mathbb{E}\left[\left|Z^{(0)}-\tilde{Z}^{(0)}\right|\right].
\]
Therefore the chains contract towards one another on average, at a rate that depends on $N_1,N_2$.
This leads to geometric convergence in the Wasserstein distance \citep[e.g Theorem 2.1 in][]{gibbs2004}.
Using manipulations, the inequality $\mathbb{E}[|Z^{(0)}-\tilde{Z}^{(0)}|]\leq 1$ and
the inequality $\log(x)\leq x-1$ for all $x>0$, the Markov chain $(Z^{(t)})$
is less than $\epsilon$ from stationarity in the 1-Wasserstein distance,
with $\epsilon\in(0,1)$, for all $t$ larger than
\[\frac{(N_1+1)(N_2+1)}{N_1+N_2+2}(-\log(\epsilon)).\]
The above is an upper bound on the $\epsilon$-mixing time of the chain.
If both $N_1$ and $N_2$ increase proportionally to the sum $N=N_1+N_2$,
the bound increases linearly in $N$, which is in agreement
with figures in the main text. If $N_1$  is fixed while $N_2$ increases
to infinity, the mixing time is upper bounded by a constant independent of $N$.

\section{Estimated convergence rate based on coupled chains} \label{appx:empiricalcvgrate}

We provide a quick description of the method of \citet{biswas2019estimating}
to obtain upper bounds on the distance between a chain and its limiting distribution,
that can be estimated by Monte Carlo simulations. We consider the total variation distance,
defined for two variables $X$ and $Y$ as
\[\|\mathcal{L}(X) - \mathcal{L}(Y)\|_{\text{TV}} = \frac{1}{2} \sup_{h:|h|\leq 1} |\mathbb{E}[h(X) - h(Y)]|.\]
The supremum is over functions $h$ bounded by one.
Suppose that we can construct two chains $X^{(t)}$ and $Y^{(t)}$,
evolving marginally according to a Markov kernel $P$, converging to a distribution of interest $\pi$, 
and such that $X^{(t)}=Y^{(t)}$ in distribution for all
$t\geq 0$, while $X^{(t)}=Y^{(t-L)}$ almost surely for $t\geq \tau^{(L)}=\inf\{t: X^{(t)}=Y^{(t-L)}\}$
and where $L\geq 1$ is a user-chosen ``lag'' integer.

Then \citet{jacob2017unbiased,biswas2019estimating} provide a set of assumptions under which,
for any function $h$ bounded by one, 
\begin{align*}
  \int h(x) \pi(dx) - \mathbb{E}[h(X^{(t)})]  = \mathbb{E}\left[\sum_{j=1}^{\lceil (\tau^{(L)}-L-t)/L\rceil} \left(h(X^{(t+jL)}) - h(Y^{(t+(j-1)L)})\right)\right].
\end{align*}
This comes from a telescopic sum argument first derived in the context of coupled Markov chains by \citet{glynn2014exact}.
From there, conditioning on $\tau^{(L)}$, taking triangle inequalities and upper bounding
$|h(x)-h(y)|$ by $2$ for all $x,y$, we obtain 
\begin{align*}
  \|\mathcal{L}(X^{(t)}) - \pi \|_{\text{TV}}&\leq \mathbb{E}\left[\max(0, \lceil (\tau^{(L)}-L-t)/L\rceil)\right].
\end{align*}
The right-hand side can be estimated by generating independent copies of $\tau^{(L)}$,
in other words by sampling coupled Markov chains with a lag and recording their meeting time $\tau^{(L)}$,
and then by approximating the expectation by an empirical average. 

It remains to describe the coupling employed for the proposed 
Gibbs sampler. 
Denoting by $P$ the transition kernel of the Gibbs sampler, so that 
$\mathbf{u}^{\prime} \sim P(\mathbf{u},\cdot)$
given $\mathbf{u}\in\Delta^{N}$, 
we construct a coupled transition kernel
$\bar{P}$ on $\Delta^{N}\times\Delta^{N}$. The second chain is denoted by 
$\tilde{\mathbf{u}}$.
We employ the following coupling strategy. 
With probability $\omega \in (0,1)$, we propagate the two chains using
common random numbers. As we have established in the case $K=2$
in Section \ref{appx:cvgrate}, this induces a contraction between the chains.
With the remaining probability, conditional updates are maximally
coupled \citep[e.g][]{thorisson2000coupling,jacob2017unbiased}, so that we have a chance
to observe $\mathbf{u}^{(t)}=\tilde{\mathbf{u}}^{(t-L)}$.
This mixture strategy is 
similar to that employed in \citet{heng2019unbiased} for Hamiltonian Monte Carlo samplers.
The mixing parameter $\omega$ was set to $0.9$ throughout our
experiments. 

For the choice of lag, we have used $L=50$ for Figure 5(a) in the main text
and $L=10\times N$ for Figure 5(b), and both figures
are obtained from $500$ independent copies of $\tau^{(L)}$,
which took a few minutes to run on a personal computer from 2019
with a 2.4 GHz Intel Core i9, using 7 cores, 14 threads.

\end{document}